\newif\ifdraft\draftfalse
\newif\ifcite\citefalse
\newif\ifblow\blowtrue
\ifcite\usepackage{showkeys}\else\usepackage[notcite,notref]{showkeys}\fi\fi
\newtheorem{proposition}[equation]{Proposition}
\newtheorem{theorem}[equation]{Theorem}
\newtheorem{lemma}[equation]{Lemma}
\theoremstyle{remark}
\theoremstyle{definition}
\theoremstyle{remark}
\newtheorem{remark}[equation]{Remark}
\numberwithin{equation}{section}
\def\bc{\begin{cases}}
\def\ec{\end{cases}}
\def\a{\alpha}
\def\t{\tilde}
\def\ci{{\mathcal I}}
\def\bc{{\mathbb C}}
\def\er{\eqref}
\def\bc{\mathbb C}
\def\lp2{L_pH_{2p}}
\def\bean{\begin{eqnarray}}
\def\eean{\end{eqnarray}}
\def\bea{\begin{eqnarray*}}
\def\eea{\end{eqnarray*}}
\def\beq{\begin{equation}}
\def\eeq{\end{equation}}
\def\beq*{\begin{equation*}}
\def\eeq*{\end{equation*}}
\def\bal{\begin{align*}}
\def\eal{\end{align*}}
\def\baln{\begin{align}}
\def\ealn{\end{align}}
\def\beg{\begin{gather*}}
\def\eng{\end{gather*}}
\def\bqu{\begin{question}}
\def\equ{\end{question}}
\def\implies{\Longrightarrow}
\def\nm{\nonumber}
\def\ban{\begin{proof}[Answer]}
\def\ean{\end{proof}}
\def\p{\partial}
\def\bqu{\begin{question}}
\def\equ{\end{question}}
\def\0110{\begin{matrix} 0 & 1\\1&0\end{matrix}}
\def\t{\tilde}
\def\fg{\mathfrak{g}}
\def\fh{\mathfrak{h}}
\def\fl{\mathfrak{l}}
\def\fn{\mathfrak{n}}
\def\fo{\mathfrak{o}}
\def\fp{\mathfrak{p}}
\def\fs{\mathfrak{s}}
\def\ban{\begin{proof}[Answer]}
\def\ean{\end{proof}}
\def\ben{\begin{equation}}
\def\een{\end{equation}}
\def\j1{{(j+1)}}
\def\e{\epsilon}
\begin{document}

\title[Solutions to Toda theories and related properties]
{
Solving Toda field theories and related 
algebraic and differential properties
}

\author{Zhaohu Nie}
\email{zhaohu.nie@usu.edu}

\address{Department of Mathematics and Statistics, Utah State University, Logan, UT 84322-3900}



\begin{abstract}
Toda field theories are important integrable systems. They can be regarded as constrained WZNW models, and this viewpoint helps to give their explicit general solutions, especially when a Drinfeld-Sokolov gauge is used. The main objective of this paper is to carry out this approach of solving the Toda field theories for the classical Lie algebras, following \cite{W-sym}. In this process, we discover and prove some algebraic identities for principal minors of special matrices. The known elegant solutions of \cite{L} fit in our scheme in the sense that they are the general solutions to our conditions discovered in this solving process. To prove this, we find and prove some differential identities for iterated integrals. It can be said that altogether our paper gives complete mathematical proofs for Leznov's solutions. 
\end{abstract}

\maketitle

\section{Introduction}
The Liouville equation is
\begin{equation}\label{liouville}
u_{xy}=-e^{2u},
\end{equation}
where $x$ and $y$ are the independent variables, and $u$ is an unknown function. 
Liouville found the general solutions to \er{liouville} involving two arbitrary functions $f(x)$ and $g(y)$, which depend on the two independent variables separately. (For this and general backgrounds on integrable systems, we refer to \cite{BBT}.) 

The Toda field theories are generalizations of the Liouville equation in the following way. Let $\fg$ be a complex simple Lie algebra of rank $n$. Let $\fh\subset \fg$ be a Cartan subalgebra, and we denote the corresponding set of roots of $\fg$ by $\Delta$, the sets of positive/negative roots by $\Delta_\pm$, and the set of positive simple roots by $
\{\a_i\}_{i=1}^n$. Let $\fg=\fh\oplus \bigoplus_{\a\in \Delta} \fg_\a$ be the root space decomposition. For $\a\in \Delta_+$, let $e_\a$ and $e_{-\a}$ be root vectors in the root spaces $\fg_\a$ and $\fg_{-\a}$ such that for $H_\a=[e_\a,e_{-\a}]\in \fh$, we have $\a(H_\a)=2$. 
Then the Cartan matrix of $\fg$ is 
\begin{equation}\label{Cartan}
A=(a_{ij})_{i,j=1}^n
\end{equation}
defined by $a_{ij}=\a_i(H_{\a_j})$.
For $1\leq i\leq n$, let $u_i$ be $n$ unknown functions of the independent variables $x$ and $y$. The Toda field theory associated to $\fg$ 
is 
\begin{equation}\label{toda}
u_{i,xy}=-e^{\rho_i}:=-\exp\Big({\sum_{j=1}^n a_{ij}u_j}\Big),\quad 1\leq i\leq n.
\end{equation}
(More specifically, these are the \emph{conformal} Toda field theories, as opposed to the affine ones which we don't consider in this paper.) 

The Cartan matrix $A$ \er{Cartan} completely determines the complex simple Lie algebra $\fg$. 
The classification of simple Lie algebras (see for example \cite{FH}) asserts that they come in four infinite series, listed below:
\begin{align*}
A_n&=\fs\fl_{n+1}, \quad n\geq 1, &&B_n=\fs\fo_{2n+1},\quad n\geq 2,\\
C_n&=\fs\fp_{2n},\quad n\geq 3, &&D_n=\fs\fo_{2n},\quad n\geq 4,
\end{align*}
and finitely many exceptional ones. 
We will also use the same letters for the corresponding Lie groups, where in the orthogonal cases we allow the determinants to be $\pm 1$. For example, $C_2=Sp(4,\bc)$ and $B_3=O(7,\bc)$. The Liouville equation \er{liouville} is the Toda field theory associated to $A_1$.

The Toda field theories \er{toda} admit a zero-curvature presentation \cite{LS} as follows. Define
\begin{equation}\label{def eps}
\epsilon:=\sum_{i=1}^n e_{-\a_i}.
\end{equation}
Then \er{toda} is equivalent to 
\begin{equation}\label{zero-curv}
\Big[\p_x-\sum_{i=1}^n u_{i,x} H_{\a_i}-\epsilon,\p_y+\sum_{i=1}^n e^{\rho_i} e_{\a_i}\Big]=0. 
\end{equation}
There had been a lot of studies on how to use this zero-curvature presentation to solve the Toda field theory, going back to \cite{LS}. (See \cite{LS-book} for more details.) 

In this paper, we follow \cite{W-sym} to investigate the solutions to the Toda field theories. 
It was shown in \cite{feher-early} that that \er{toda} can be regarded as a constrained WZNW model for conformal field theory associated to the Lie algebra $\fg$. The general solution to the WZNW model is the product of two chiral fields
\begin{equation}\label{chirality}
\Upsilon(x,y)=\Phi(x)\cdot \Psi(y),
\end{equation}
where $\Phi(x)$ and $\Psi(y)$ take values in the group $G$ corresponding to $\fg$. 
To get the Toda field theory, \cite{feher-early} puts on the following constrains. On the $x$-side, one has 
\begin{equation}\label{reduction}
\begin{split}
J:&=\p_x \Phi\cdot \Phi^{-1}\in \fg,\\
\pi_-J&=\epsilon=\sum_{i=1}^n e_{-\a_i},
\end{split}
\end{equation}
where $\pi_\mp:\fg\to \fn_\mp:=\oplus_{\a\in \Delta_\mp}\fg_\a$ are the canonical projections. 
On the $y$-side, one has
\begin{equation}\label{reduction-y}
\begin{split}
K:&=\Psi^{-1}\cdot \p_y \Psi \in \fg,\\
\pi_+ K&=\sum_{i=1}^n e_{\a_i}.
\end{split}
\end{equation}
In our classical Lie algebras, the root vector $e_{\a}$ as a matrix is 
the transpose of the matrix $e_{-\a}$. So we will concentrate on studying \er{reduction}, and the solutions to \er{reduction-y} are the solutions to \er{reduction} transposed and with $x$ replaced by $y$. 

The most convenient way to solve $\Phi(x)$ in \er{reduction} is through a Drinfeld-Sokolov gauge (DS-gauge for short) \cite{DS} as was done in \cite{W-sym} in the $A_n$ case. This gauge is related to the ``slice" of Kostant \cite{K1} for invariant functions on Lie algebras. Let $\fs$ be a complement of $[\e,\fg]$ in $\fg$, that is, $\fg\cong \fs\oplus [\e,\fg]$. 
Then $\fs\subset \fn_+=\bigoplus_{\a\in \Delta_+}\fg_\a$, and $\dim(\fs)=n$ is equal to the rank. Let $\{s_j\}_{j=1}^n$ be a homogeneous basis of $\fs$ with respect to the height grading. (We refer to \cite{K1} for more details.) 
The $J$ in \er{reduction} can be gauge transformed into $\e+\sum_{j=1}^n  I_j(x)s_j$ for $n$ functions $ I_j(x)$. Then \er{reduction} becomes 
\begin{equation}\label{find phi}
\p_x \Phi\cdot \Phi^{-1}=\e+\sum_{j=1}^n  I_j(x)s_j. 
\end{equation}
This equation can be solved in terms of some functions of $x$ satisfying some natural conditions, which are systems of ordinary differential equations. 

On the $y$-side, we have the transposed version. Then the solutions  $u_i(x,y)$ to \er{toda} is obtained from $\Upsilon(x,y)$ in \er{chirality} by the means of some principal minors, taking into consideration the residual gauges as in \cite{W-sym}. 

This process is carried out for the $A_n$ case in \cite{W-sym}, and there is a 
Wronskian condition for the solutions to \er{find phi}.
(See Theorem \ref{thm-A_n} Part (1).)
In Section \ref{solve}, we carry this process out for the $C_n$ and $B_n$ cases, and we find the conditions in Parts (1) of our main Theorems \ref{thm-C_n} and \ref{thm-B_n} for the solutions to \er{find phi}. We then discuss in Section \ref{solve} how to find the solutions $u_i(x,y)$ to \er{toda} using principal minors. 

After this search for solutions is done, we turn to directly proving that they are indeed solutions. 
In this process, we discover and prove some algebraic properties of minors for general linear, symplectic and orthogonal matrices as presented in Section \ref{ind int}. 
It is such algebraic identities that enable one to solve the $A_n$ Toda field theories and to view the $C_n$ and $B_n$ cases as reductions of the $A_n$ case. 

\begin{remark} We comment that the $ I_j(x)$ in \er{find phi} are also called \emph{local conservation laws} of the Toda field theory or \emph{intermediate integrals} from the viewpoint of Darboux integrability. For a very explicit presentation of the $ I_j(x)$ in terms of the Toda fields $u_i(x,y)$, see for example \cite{N1}. Such quantities, of course, are not unique, and there is a whole differential algebra of them, that is, any polynomial of the conserved quantities and their derivatives is still conserved. Later we will see some other generators of the local conservation laws in the $C_n$ and the $B_n$ cases, when we apply the Gram-Schmidt process. See \er{def I_j} and \er{def I_j-B}.  
\end{remark}

Leznov \cite{L} actually took things one step further by solving the Wronskian condition for the $A_n$ case using iterated integrals of $n$ arbitrary functions. \cite{L} also ingeniously obtained the form of the solutions to the $C_n$ and $B_n$ cases 
by enforcing symmetries among the integrands of the iterated integrals. (Also see \cite{EGR} for the nonabelian version.)

We verify that the elegant solutions of \cite{L} fit in our scheme, in the sense that they provide the general solutions to our conditions in Parts (1) of our theorems. The proof of this uses some basic properties of iterated integrals, which we present in Section \ref{ite int}. We also present a version for the $D_n$ case. As a whole, our paper can be said to have provided complete mathematical proofs for Leznov's solutions. 

We now list the formulas for the solutions to the $C_n$ and the $B_n$ cases, 
and for completeness also to the $A_n$ case. 

First some notation. For $m\geq 0$, let $F(x)=(f_0(x),\cdots,f_{m}(x))$ be a vector of $m+1$ functions of $x$. For $j\geq 0$, let 
$F^{(j)}(x)=(f_0^{(j)}(x),\cdots,f_{m}^{(j)}(x))$ be the $j$th derivative of $F(x)$ with respect to $x$. 
Similarly we have $G(y)=(g_0(y),\cdots,g_m(y))$ and $G^{(j)}(y)$. 

Let 
$$
F\cdot G=F(x)\cdot G(y)=\sum_{i=0}^{m} f_i(x)g_i(y)
$$
be the inner product. 
For $i\geq 1$, define 
\begin{equation}\label{taup}
\begin{split}
\tau_{i,F,G}(x,y)&=\det\Big(\big(\p_x^{j}\p_y^{k} (F\cdot G)\big)_{j,k=0}^{i-1}\Big)\\
&=\det\begin{pmatrix}
F\cdot G 
& F\cdot G' & \cdots & F\cdot G^{(i-1)}\\
F'\cdot G 
& F'\cdot G' & \cdots  & F'\cdot G^{(i-1)}\\
\cdots  & \cdots  & \cdots &\cdots\\
F^{(i-1)}\cdot G
& F^{(i-1)}\cdot G' & \cdots & F^{(i-1)}\cdot G^{(i-1)} 
\end{pmatrix}.
\end{split}
\end{equation}
For convenience, we also define $\tau_{0,F,G}(x,y)=1$. 
(When it is clear from the context what the function vectors $F(x)$ and $G(y)$ are, we just write $\tau_i(x,y)$ or even $\tau_i$.) 

To relate to the solutions to the Toda field theories in \cite{L}, we need 
iterated integrals. 
Given $m$ functions $\phi_1(x),\cdots,\phi_m(x)$ and for a sequence $(a_1,\cdots,a_k)$ with $1\leq a_i\leq m$, we define
\begin{multline}\label{short}
\ci(a_1\cdots a_k)
:=\int^x_0\phi_{a_1}(x_1)\,dx_1\int^{x_1}_0\phi_{a_2}(x_2)\,dx_2\cdots\int^{x_{k-1}}_0\phi_{a_k}(x_k)\,dx_k
\end{multline}
as a function of $x$. 
(For brevity, we usually omit the commas between the different entries of the sequence.) 
Most likely the sequence is monotonic or at least piecewise monotonic. We will use an arrow or some dots to denote a monotonic piece with the given initial and end values. For example, 
$$
\ci(1\cdots 4)=\ci(1234),\quad \ci(1\to 3\to 1)=\ci(12321).
$$
One fine point, important to this paper, is that some integrands may be repeated in the iterated integrals. We always write out the repetition explicitly. Therefore 
$$
\ci(3,3)=\int^x_0\phi_{3}(x_1)\,dx_1\int^{x_1}_0\phi_{3}(x_2)\,dx_2 \quad \text{but}\quad \ci(3\to 3)=\ci(3)=\int_0^x \phi_3(x_1)\,dx_1.
$$ 
We also use the convention that $\ci(\varnothing)=1$ for the empty sequence. In particular, $\ci(1\to 0)=1$ since, as a convention in this paper, there is no $\phi_0(x)$. 

For short, we will write $\displaystyle\int f$
for $\displaystyle\int^x_0 f(x_1)\,dx_1$. Note that 
\begin{equation}\label{diff i}
\ci(a_1\cdots a_k)=\int \phi_{a_1}\ci(a_2\cdots a_k)\text{\ \ and\ \ } \frac{d}{dx}\ci(a_1\cdots a_k)=\phi_{a_1}\ci(a_2\cdots a_k).
\end{equation}

\begin{theorem}[$A_n$, \cites{W-sym,L}]\label{thm-A_n} 
\begin{enumerate}
\item Let 
$$F(x)=(f_0(x),\cdots,f_{n}(x))$$
be a vector of $n+1$ functions of $x$ such that the Wronskian 
\begin{equation}\label{w=1}
W(F)=W(f_0,\cdots,f_{n})=1.
\end{equation}
Let $G(x)=(g_0(y),\cdots,g_{n}(y))$ be a vector of $n+1$ functions of $y$ with exactly the same property. 

Then the 
$$u_i=-\log \tau_i,\quad 1\leq i\leq n$$
satisfy the $A_n$ Toda field theory \er{toda}, where $\tau_i$ is defined in \er{taup}. 

\item Furthermore, let  $\phi_1(x),\cdots,\phi_n(x)$ be $n$ arbitrary functions of $x$. Define 
\begin{gather}
f_0(x)=\prod_{i=1}^n \phi_i(x)^{-\frac{n+1-i}{n+1}}=\phi_1(x)^{-\frac{n}{n+1}}\cdots \phi_n(x)^{-\frac{1}{n+1}},\label{def f_0}\\
f_i(x)=f_0(x)\ci(1\to i),\quad i\geq 1. \nm
\end{gather}
Then the most general solution to \er{w=1} is 
\begin{equation}\label{W=1}
F(x)=(f_0,f_1,f_2,\cdots,f_n)=f_0(1,\ci(1),\ci(12),\cdots,\ci(1\to n)).
\end{equation}
\end{enumerate}
\end{theorem}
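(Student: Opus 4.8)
The plan is to handle the two parts in turn. For Part~(1), I would first recast the $A_n$ system \eqref{toda} in Hirota bilinear form. For $A_n$ the Cartan matrix \eqref{Cartan} is tridiagonal with $a_{ii}=2$ and $a_{i,i\pm1}=-1$, so with $u_i=-\log\tau_i$ and the boundary conventions $\tau_0=1$ (as \eqref{taup} is already set up) and $\tau_{n+1}=1$, the $i$-th equation of \eqref{toda} becomes
\[
\tau_i\,\partial_x\partial_y\tau_i-(\partial_x\tau_i)(\partial_y\tau_i)=\tau_{i-1}\tau_{i+1},\qquad 1\le i\le n .
\]
Thus Part~(1) is entirely the statement that the determinants $\tau_i$ of \eqref{taup} satisfy this identity, together with the checks $\tau_0=1$ and $\tau_{n+1}=1$.

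To prove the bilinear identity I would work with the single $(i+1)\times(i+1)$ matrix $P$ with entries $F^{(j)}\cdot G^{(k)}$, $0\le j,k\le i$. Differentiating the $i\times i$ determinant $\tau_i$ one row at a time and using $\partial_x(F^{(j)}\cdot G^{(k)})=F^{(j+1)}\cdot G^{(k)}$, every term in which a non-last row is differentiated acquires a repeated row and drops out, so $\partial_x\tau_i$ equals the minor of $P$ obtained by deleting row $i-1$ and column $i$. Running the same bookkeeping for $\partial_y$ and then for $\partial_x\partial_y$ shows that $\partial_y\tau_i$ is $P$ with row $i$ and column $i-1$ deleted, that $\partial_x\partial_y\tau_i$ is $P$ with row $i-1$ and column $i-1$ deleted, while $\tau_i$, $\tau_{i-1}$, $\tau_{i+1}$ are $P$ with (row $i$, column $i$) deleted, with (rows $i-1,i$ and columns $i-1,i$) deleted, and with nothing deleted, respectively. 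Feeding these six minors into the Desnanot--Jacobi (Dodgson condensation) identity for $P$, with distinguished rows and columns $i-1$ and $i$, gives exactly the displayed identity. Finally $\tau_{n+1}$ is the determinant of $\mathcal F\,\mathcal G^{\mathsf{T}}$, where $\mathcal F$ and $\mathcal G$ are the $(n+1)\times(n+1)$ Wronskian matrices of $F$ and $G$, so $\tau_{n+1}=W(F)\,W(G)=1$ by \eqref{w=1}.

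For Part~(2) I would first verify that the prescribed $F$ solves \eqref{w=1}. Using the elementary Wronskian identities $W(\lambda h_0,\dots,\lambda h_n)=\lambda^{\,n+1}W(h_0,\dots,h_n)$ and $W(1,h_1,\dots,h_n)=W(h_1',\dots,h_n')$ (the latter by expanding along the first column), together with the differentiation rule \eqref{diff i}, which gives $\frac{d}{dx}\ci(1\to i)=\phi_1\,\ci(2\to i)$, one gets
\[
W(1,\ci(1),\ci(12),\dots,\ci(1\to n))=\phi_1^{\,n}\,W(1,\ci(2),\ci(23),\dots,\ci(2\to n)).
\]
The right-hand Wronskian has the same shape with one fewer function and integrands $\phi_2,\dots,\phi_n$, so iterating yields $W(1,\ci(1),\dots,\ci(1\to n))=\prod_{i=1}^n\phi_i^{\,n+1-i}$; multiplying by $f_0^{\,n+1}$ and inserting \eqref{def f_0} gives $W(F)=1$.

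It remains to see that \eqref{W=1} is the general solution. Given $F$ with $W(F)=1$, I would induct on $n$, first normalizing the jet of $F$ at $x=0$: the iterated-integral form forces $f_i$ and a whole flag of its derivatives to vanish at $x=0$, a condition that is generic and, if needed, can be arranged by a constant unipotent change of basis (which leaves the Wronskian unchanged). Writing $h_i=f_i/f_0$, the requirement $h_1=\ci(1)$ forces $\phi_1:=h_1'$; passing to the derived $n$-tuple $g_j:=h_{j+1}'/h_1'$ (so $g_0=1$), one checks that $W(1,g_1,\dots,g_{n-1})$ is nowhere zero, so after rescaling this tuple has Wronskian $1$ and meets the same hypotheses. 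The inductive hypothesis supplies $\psi_1,\dots,\psi_{n-1}$ with $g_j$ equal to the iterated integrals of the $\psi$'s, and comparing with $h_i'=\phi_1\,\ci(2\to i)$ forces $\phi_{k+1}:=\psi_k$; integrating back from $0$ recovers $h_i=\ci(1\to i)$, hence $f_i=f_0\,\ci(1\to i)$, and $f_0$ is then pinned to \eqref{def f_0} by the forward computation. The routine parts are the two determinant computations (the $\tau$-derivative bookkeeping plus Desnanot--Jacobi, and the Wronskian recursion); the main obstacle is this last direction --- both making precise the sense in which \eqref{W=1} is ``the most general'' solution, i.e.\ identifying the implicit initial-value normalization, and then carrying the induction through while tracking how $f_0$ at each level is tied to the $\phi_i$ and the index shift $\phi_{k+1}=\psi_k$. (One can alternatively note that \eqref{w=1} says $f_0,\dots,f_n$ is a unit-Wronskian fundamental system of a monic order-$(n+1)$ linear ODE whose $y^{(n)}$-coefficient is $-W(F)'=0$, so its solution space is ``$n$ free functions'' worth, matching the $\phi_i$; but converting this into a proof needs essentially the same Miura-type substitution.)
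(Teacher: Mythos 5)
Your Part (1) is exactly the paper's argument: the paper reduces \eqref{toda} to $DD(\tau_i)=\tau_{i-1}\tau_{i+1}$ (Lemma \ref{use tau}), identifies $\tau_i,\ \tau_{i,x},\ \tau_{i,y},\ \tau_{i,xy}$ with the four cofactors $C_{i,i},\ -C_{i-1,i},\ -C_{i,i-1},\ C_{i-1,i-1}$ of $T_{i+1}$, and applies its Proposition \ref{direct} with $|S|=2$ --- which is precisely the Desnanot--Jacobi identity you invoke --- and closes with $\tau_{n+1}=W(F)W(G)=1$. For the verification half of Part (2), your route via the two standard reductions $W(\lambda h_0,\dots,\lambda h_n)=\lambda^{n+1}W(h_0,\dots,h_n)$ and $W(1,h_1,\dots,h_n)=W(h_1',\dots,h_n')$, iterated using $\tfrac{d}{dx}\ci(1\to i)=\phi_1\ci(2\to i)$, is a cleaner packaging of the paper's explicit column operations (Lemma \ref{prove W=1}); both land on $W(F)=f_0^{n+1}\prod_i\phi_i^{n+1-i}=1$. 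Where you diverge is the ``most general'' direction: the paper's proof of Part (2) consists solely of Lemma \ref{prove W=1}, i.e.\ only the forward verification, so your converse sketch attempts more than the paper actually proves. Your instinct there is right that a normalization is hiding in the statement: the ansatz \eqref{W=1} forces $f_i(0)=0$ for $i\ge 1$ (and a whole flag of vanishing derivative conditions), which a general unit-Wronskian tuple need not satisfy --- e.g.\ for $n=1$, $(\cos x,\ \sin x+\cos x)$ has Wronskian $1$ but is only of the form \eqref{W=1} after the unipotent change $f_1\mapsto f_1-f_0$. So the converse holds only up to a constant unipotent transformation (equivalently, a choice of base point and integration constants), and your induction would have to carry that normalization along explicitly; as written it is a plausible but incomplete sketch. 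Since the paper itself omits this direction, this is not a defect relative to the paper's proof, but be aware that your write-up, like the paper's, does not yet establish the literal ``most general'' claim.
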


\begin{theorem}[$C_n$]\label{thm-C_n} 
\begin{enumerate}
\item 
Let 
$$F(x)=(f_0(x),\cdots,f_{2n-1}(x))$$ be a vector of $2n$ functions of $x$, such that 
\begin{align}
C(F^{(i)},F^{(i+1)})&=0,\quad 0\leq i\leq n-2,\label{c=0}\\
C(F^{(n-1)},F^{(n)})&=-1,\label{c=1}
\end{align}
where $C(X,Y)=X\Omega Y^T$ is the bilinear form using the skew-symmetric
\begin{equation}\label{Omega}
\Omega=\begin{pmatrix}
0 & I_n\\
-I_n & 0
\end{pmatrix}
\end{equation}
preserved by the symplectic group $Sp(2n,\bc)$. 

Let $G(y)=(g_0(y),\cdots,g_{2n-1}(y))$ be a vector of $2n$ functions of $y$ with exactly the same properties. 

Then the 
$$u_i=-\log \tau_i,\quad 1\leq i\leq n,$$
satisfy the $C_n$ Toda field theory \er{toda}, where $\tau_i$ is defined in \er{taup} using the above $F(x)$ and $G(y)$. 

\item 
Furthermore, let  $\phi_1(x),\cdots,\phi_n(x)$ be $n$ arbitrary functions of $x$. Define 
\begin{align}
p(x)&=\frac{1}{\phi_1(x)\cdots\phi_{n-1}(x)\sqrt{\phi_n(x)}}&&\label{def p(x)}\\
f_i(x)&=(-1)^{n-i} p(x)\ci(1\to i), && 0\leq i\leq n-1\nm\\
f_i(x)&=p(x)\ci(1\to n\to (i-n+1)),&& n\leq i\leq 2n-2\nm\\
f_{2n-1}(x)&=p(x)\ci(1\to n)&&\nm
\end{align}
Then the most general solution to \er{c=0} and \er{c=1} is 
\begin{multline}\label{Lez Cn}
F(x)=(f_0,f_1,f_2,\cdots,f_{2n-1})\\
=p(x)\big((-1)^{n},(-1)^{n-1}\ci(1),(-1)^{n-2}\ci(12),\cdots,-\ci(1\to (n-1)),\\
\ci(1\to n\to 1),\ci(1\to n\to 2),\cdots,\ci(1\to n, (n-1)),\ci(1\to n)\big).
\end{multline}
\end{enumerate}
\end{theorem}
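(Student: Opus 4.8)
The plan is to prove Part (2) in two stages: first verify that the explicit $F(x)$ defined by \er{def p(x)} satisfies the symplectic conditions \er{c=0} and \er{c=1}, and then argue that this is the \emph{general} solution by a dimension/parameter count. For the first stage, I would compute the derivatives $F^{(i)}(x)$ using the recursive differentiation rule \er{diff i} for iterated integrals. The key observation is that differentiating the vector $p(x)(1, \ci(1), \ci(12), \dots)$ produces, by the product rule, one term where $p$ is differentiated and one term where each $\ci$-entry is shortened by dropping its leading index. Written in terms of the ``normalized'' components (dividing out $p$), the derivative of the tuple of iterated integrals is essentially a shift operator plus a multiple (coming from $p'/p$) of the tuple itself. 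Since $C(X,Y)=X\Omega Y^T$ is bilinear and the self-term $C(F^{(i)},F^{(i)})=0$ by skew-symmetry, the $p'/p$ contributions drop out of $C(F^{(i)},F^{(i+1)})$, and the problem reduces to evaluating $C$ on pure iterated-integral tuples and their shifts.

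The core computation is then a telescoping identity: I expect $C(F^{(i)},F^{(i+1)})$, after stripping $p^2$, to reduce to a sum over $\Omega$-paired indices of products of two iterated integrals whose concatenated index sequences all coincide (after accounting for the ``bounce'' $1\to n\to\cdots$ built into the definition), so the terms cancel telescopically or sum to a constant. The role of the factor $\sqrt{\phi_n}$ in $p(x)$ and the signs $(-1)^{n-i}$ is exactly to make the normalization $C(F^{(n-1)},F^{(n)})=-1$ come out with the right constant; I would track these carefully, since $\phi_n$ appears with half-weight precisely because the index $n$ is the ``turning point'' of the bounce and so is traversed twice in the full sequence $\ci(1\to n\to (n-1))$. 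Here I would likely invoke, or prove inline, the elementary fact that for the reversed sequence one has an integration-by-parts relation of the form $\ci(a_1\cdots a_k)+(-1)^k\ci(a_k\cdots a_1)=\sum(\text{shorter products})$, which is the iterated-integral shuffle/reversal identity; this is the technical heart and is presumably among the ``basic properties of iterated integrals'' promised for Section \ref{ite int}.

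For the second stage — that \er{Lez Cn} is the \emph{most general} solution — I would argue by counting degrees of freedom. The conditions \er{c=0}–\er{c=1} form a determined system of ODEs for the $2n$ functions $f_0,\dots,f_{2n-1}$, and the construction exhibits a solution depending on $n$ arbitrary functions $\phi_1,\dots,\phi_n$ together with the integration constants hidden in the lower limits; matching this against the solution count for the ODE system (which is governed by the DS-gauge picture and Part (1), essentially the same count as in the $A_n$ case of Theorem \ref{thm-A_n} after imposing the symplectic symmetry) shows the family is complete. Alternatively, and more robustly, I would show directly that \emph{any} solution $F$ of \er{c=0}–\er{c=1} can be put in the stated form: define $p:=\pm f_0\cdot(\text{sign})$ up to the leading sign, set $\phi_i$ by inverting the relations $f_i/f_{i-1}$ read off the first $n$ components (so that $\ci(1\to i)=(-1)^{i-n}f_i/p$ forces $\phi_i\,dx = d(f_i/f_{i-1}\cdot(\pm 1))$ essentially), and then show that conditions \er{c=0} for $i=n-1,\dots$ propagate these relations to the remaining components $f_n,\dots,f_{2n-1}$, forcing the bounce structure. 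The main obstacle I anticipate is precisely this propagation step: checking that the symplectic orthogonality relations, applied successively, rigidly determine the ``upper half'' $f_n,\dots,f_{2n-1}$ from the ``lower half'' in exactly the bounced-iterated-integral pattern, with no residual freedom beyond the $\phi_i$ — in other words, that the symplectic constraint is equivalent to Leznov's symmetry among the integrands.
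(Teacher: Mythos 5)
Your verification stage is in the same spirit as the paper's but packaged differently. The paper does not attack $C(F^{(i)},F^{(i+1)})$ head-on: it observes that after imposing the symmetry $\phi_{2n-i}=\phi_i$ on $2n-1$ integrands, the vector \er{Lez Cn} is $\bar F(x)Q_{2n}$, where $\bar F=p\cdot(1,\ci(1),\dots,\ci(1\to n\to 1))$ is the symmetrized $A_{2n-1}$-type vector and $Q_{2n}$ conjugates $\Omega$ into the anti-diagonal matrix $J_{2n}$ of \er{def j}; the conditions then fall out of Proposition \ref{general dr}, whose engine is the alternating-sum identity $\sum_{i}(-1)^i\,\ci(1\to i)\,\ci(n\to i+1)=0$ of Proposition \ref{concrete} --- precisely the reversal/shuffle identity you single out as the technical heart, so you have located the right key lemma. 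Two cautions on your sketch, though. First, killing the $p'/p$ term by skew-symmetry only handles the first derivative; for general $i$ the vector $F^{(i)}$ is a linear combination of all the delayed tuples $\delta_l\t F$ with $l\leq i$, and one needs \emph{every} cross-pairing $(\delta_l\t F)J(s\delta_k\t F)^T$ with $l+k<2n-1$ to vanish individually --- each is a separate instance of the alternating-sum identity on a sub-window of integrands, not one telescoping sum; this bookkeeping is exactly Propositions \ref{strong} and \ref{general dr}. Second, your heuristic for $\sqrt{\phi_n}$ is backwards: in the symmetrized list $\phi_1,\dots,\phi_{n-1},\phi_n,\phi_{n-1},\dots,\phi_1$ it is $\phi_1,\dots,\phi_{n-1}$ that each occur twice and $\phi_n$ that occurs once, so $\big(\prod\big)^{-1/2}=1/(\phi_1\cdots\phi_{n-1}\sqrt{\phi_n})$. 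Finally, your second stage addresses the ``most general'' claim, which the paper's proof does not in fact establish --- it only verifies that \er{Lez Cn} satisfies \er{c=0} and \er{c=1}; your reconstruction route (recover $p$ and the $\phi_i$ from the first $n$ components, then show the symplectic relations force the bounce structure on $f_n,\dots,f_{2n-1}$) is the natural way to fill that gap, but the propagation step is genuinely nontrivial and would have to be carried out rather than asserted.
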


\begin{theorem}[$B_n$]\label{thm-B_n} 
\begin{enumerate}
\item Let 
$$F(x)=(f_0(x),\cdots,f_{2n}(x))$$ be a vector of $2n+1$ functions of $x$, 
such that 
\begin{align}
B(F^{(i)},F^{(i)})&=0,\quad 0\leq i\leq n-1,\label{b=0}\\
B(F^{(n)},F^{(n)})&=1,\label{b=1}
\end{align}
where $B(X,Y)=X\Theta Y^T$ is the bilinear form using the symmetric 
\begin{equation}\label{Theta}
\Theta=\begin{pmatrix}
0& I_n &0\\
I_n & 0 & 0\\
0 & 0 & 1
\end{pmatrix}\
\end{equation}
preserved by the orthogonal group $O(2n+1,\bc)$. 

Let $G(y)=(g_0(y),\cdots,g_{2n}(y))$ be a vector of $2n+1$ functions of $y$ with exactly the same properties. 

Then the 
\begin{equation}\label{B special}
\begin{split}
u_i&=-\log \tau_i, \quad 1\leq i\leq n-1,\\
u_n&=- \frac{1}{2}\log \tau_n 
\end{split}
\end{equation}
satisfy the $B_n$ Toda field theory \er{toda} (with a $\mp \frac{1}{2}$ as coefficient for the last equation for $u_{n,xy}$). 
\item 
Furthermore, let  $\phi_1(x),\cdots,\phi_n(x)$ be $n$ arbitrary functions of $x$. Define 
\begin{align*}
p(x)&=\frac{1}{\phi_1(x)\cdots\phi_{n-1}(x){\phi_n(x)}}&&\\
f_i(x)&=(-1)^{n-i} p(x)\ci(1\to i), && 0\leq i\leq n-1\\
f_i(x)&=p(x)\ci(1\to n,n\to (i-n+1)),&& n\leq i\leq 2n-2\\
f_{2n-1}(x)&=p(x)\ci(1\to n,n)&&\\
f_{2n}(x)&=p(x)\ci(1\to n)&&
\end{align*}
Then the most general solution to \er{b=0} and \er{b=1} is 
\begin{multline}\label{Lez Bn}
F(x)=(f_0,f_1,f_2,\cdots,f_{2n})\\
=p(x)\big((-1)^n,(-1)^{n-1}\ci(1),(-1)^{n-2}\ci(12),\cdots,-\ci(1\to (n-1)),\\
\ci(1\to n,n\to 1),\ci(1\to n,n\to 2),\cdots,\ci(1\to n,n,(n-1)),\ci(1\to n,n),\ci(1\to n)\big).
\end{multline}
\end{enumerate}
\end{theorem}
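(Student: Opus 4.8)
The plan is to prove the two parts by separate mechanisms: Part~(1) by reducing the Toda system \eqref{toda} to Hirota-type bilinear identities for the $\tau_i$, and Part~(2) by a direct calculation with iterated integrals, the algebraic minor identities of Section~\ref{ind int} being the link between the two. For Part~(1), the first observation is that each $\tau_i$ in \eqref{taup} is a ``double Wronskian'': writing $\mathcal F_i$ (resp.\ $\mathcal G_i$) for the $i\times(2n+1)$ matrix with rows $F,F',\cdots,F^{(i-1)}$ (resp.\ $G,G',\cdots,G^{(i-1)}$), one has $\tau_i=\det(\mathcal F_i\mathcal G_i^T)$, whence by Cauchy--Binet $\tau_i=\sum_{|S|=i}\det\mathcal F_i[S]\det\mathcal G_i[S]$ (the sum over $i$-element subsets $S$ of the $2n+1$ columns), and, up to normalization, $\tau_i$ is a principal minor of $\Upsilon=\Phi(x)\Psi(y)$ for $\Phi,\Psi$ in the Drinfeld--Sokolov gauge of \eqref{find phi}. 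Differentiating the determinant \eqref{taup} and using multilinearity in the rows (resp.\ columns), every term but the one in which the last row (resp.\ column) is bumped from order $i-1$ to order $i$ has a repeated row (resp.\ column) and so vanishes; hence $\partial_x\tau_i$, $\partial_y\tau_i$ and $\partial_x\partial_y\tau_i$ are again determinants of this shape with exactly one index raised. Feeding these four determinants into the Desnanot--Jacobi identity for the $(i+1)\times(i+1)$ matrix with rows $F,\cdots,F^{(i)}$ and columns $G,\cdots,G^{(i)}$ yields, for every relevant $i$, the bilinear identity $\tau_i\,\partial_x\partial_y\tau_i-\partial_x\tau_i\,\partial_y\tau_i=\tau_{i-1}\tau_{i+1}$, with $\tau_0=1$; this is exactly the input used in \cite{W-sym} for $A_n$.

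Next I would translate the $B_n$ Cartan matrix. For $1\le i\le n-1$ one gets $e^{\rho_i}=\tau_{i-1}\tau_{i+1}/\tau_i^2$, so those equations of \eqref{toda} follow at once from the bilinear identity and $u_i=-\log\tau_i$; for $i=n$, using $u_n=-\tfrac12\log\tau_n$ one computes $e^{\rho_n}=\tau_{n-1}/\tau_n$, so the last equation (in its $\mp\tfrac12$ form) is equivalent to the single extra relation $\tau_{n+1}=\pm\tau_n$. This is where the hypotheses \eqref{b=0}--\eqref{b=1} enter: repeated differentiation of them shows $B(F^{(j)},F^{(k)})=0$ for all $j+k\le 2n-1$, so the $\Theta$-Gram matrix of $F,\cdots,F^{(n)}$ equals $\mathrm{diag}(0,\cdots,0,1)$ --- that is, $F,\cdots,F^{(n-1)}$ span a maximal $\Theta$-isotropic subspace and $F^{(n)}$ is a unit vector $\Theta$-orthogonal to it (and likewise for $G$). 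Since $\tau_{n+1}$ involves only derivatives of order $\le n$, exactly this much structure is available, and substituting it into the version of Cauchy--Binet adapted to the bilinear form $\Theta$ --- one of the algebraic minor identities of Section~\ref{ind int}, equivalently the complementary-minor relation for the orthogonal matrix $\Upsilon$ --- collapses $\tau_{n+1}=\sum_{|S|=n+1}\det\mathcal F_{n+1}[S]\det\mathcal G_{n+1}[S]$ to a single surviving term equal to $\pm\tau_n$. This finishes Part~(1).

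For Part~(2), the first step is to check that the explicit $F$ of \eqref{Lez Bn} solves \eqref{b=0}--\eqref{b=1}. Using $\frac{d}{dx}\ci(a_1\cdots a_k)=\phi_{a_1}\ci(a_2\cdots a_k)$ (cf.\ \eqref{diff i}) one writes out $F^{(i)}$ for $0\le i\le n$; since the components of $F$ are $p(x)$ times iterated integrals of the shapes $\ci(1\to j)$, $\ci(1\to n,n\to j)$, $\ci(1\to n,n)$ and $\ci(1\to n)$ --- whose index sequences climb to $n$ and descend again in a palindromic pattern --- the entries of $F^{(i)}$ are again, up to $p(x)$ and the $\phi_a$'s, iterated integrals that pair off under the anti-diagonal pairing encoded by $\Theta$. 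Then $B(F^{(i)},F^{(i)})$ becomes a sum of products of two iterated integrals, which telescopes to $0$ for $i\le n-1$ and to the constant $1$ for $i=n$, by the integration-by-parts / shuffle-type relations for iterated integrals assembled in Section~\ref{ite int}; the prefactor $p(x)=1/(\phi_1\cdots\phi_{n-1}\phi_n)$ is precisely what makes the top pairing a constant rather than a function of $x$. To see that \eqref{Lez Bn} gives the \emph{general} solution, note that a solution $F$ of \eqref{b=0}--\eqref{b=1} is pinned down, through the Drinfeld--Sokolov picture for \eqref{find phi}, by $n$ free functions --- the local conservation laws $I_j$, which after the Gram--Schmidt reorganization mentioned in the remark become the $\phi_j$; conversely the substitution $(\phi_j)\mapsto F$ of \eqref{Lez Bn} has an explicit inverse: $f_0=(-1)^n p(x)$ recovers $p$, then $f_1/f_0=-\ci(1)$ recovers $\ci(1)$ and hence $\phi_1$, and a descending induction on the remaining components --- each governed by a first-order equation once the isotropy conditions are imposed --- recovers $\phi_2,\cdots,\phi_n$ and then all of $F$, using uniqueness for the resulting triangular first-order system with the initial conditions at $x=0$ that are built into \eqref{short}. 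Hence the two parametrizations of the solution set coincide.

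The main obstacle, as I see it, is not the bilinear algebra of Part~(1) --- essentially the $A_n$ calculation plus one complementary-minor identity --- but the bookkeeping in Part~(2): proving that $B(F^{(i)},F^{(i)})$ genuinely telescopes for Leznov's palindromic iterated integrals. The cancellation is not term-by-term; it rests on a global matching of the ascending run $1\to n$ against the descending run $n\to j$ inside each sequence, and on correctly handling the repeated index $n$ in $\ci(1\to n,n\to j)$, so the real work is to find the right inductive form of the shuffle relation (and its companion for these mixed terms) that makes the collapse transparent. The analogous, and technically lighter, crux on the algebraic side is the $\Theta$-adapted Cauchy--Binet collapse giving $\tau_{n+1}=\pm\tau_n$; together these two identities are the combinatorial engine that lets the $B_n$ problem be read off from the $A_{2n}$ one.
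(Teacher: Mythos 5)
Your proposal is correct and follows essentially the same route as the paper: Part (1) is reduced to the bilinear identity $DD(\tau_i)=\tau_{i-1}\tau_{i+1}$ (Desnanot--Jacobi, i.e.\ Proposition~\ref{direct} with $|S|=2$, as in Proposition~\ref{general}) plus the single extra relation $\tau_{n+1}=\pm\tau_n$, which the paper likewise obtains by propagating \eqref{b=0}--\eqref{b=1} to all the pairings $B(F^{(j)},F^{(k)})$, completing to an orthogonal $\Upsilon=\Phi\Psi$ by Gram--Schmidt, and invoking the complementary-minor identity of Proposition~\ref{swap}; Part (2) is verified by the same shuffle-type cancellation (Propositions~\ref{concrete}--\ref{general dr}) applied to Leznov's symmetrized integrands $\phi_{2n+1-i}=\phi_i$. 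The only differences are presentational: you leave the Gram--Schmidt completion implicit, and you sketch an inversion argument for the ``most general'' claim where the paper omits the details.
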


\begin{remark} Although the $F(x)$'s in \er{Lez Cn} and \er{Lez Bn} involve some particular orders and signs to satisfy our conditions, we note that they don't matter in $F\cdot G$ and hence in the $\tau_i$ since the $G(y)$'s will follow the same patterns. It is in this way that Leznov's solutions \cite{L} are presented. 
\end{remark}

The paper is organized as follows. In Section \ref{solve}, we study \er{find phi} in a DS-gauge for the $C_n$ and $B_n$ cases and obtain the solutions as minors. In Section \ref{ind int}, we present some algebraic properties of principal minors. In Section \ref{part 1}, we prove Parts (1) of the theorems using Section \ref{ind int}. In the $C_n$ and $B_n$ cases, we apply the Gram-Schmidt process to complete the vectors into a symplectic or orthogonal basis. In Section \ref{ite int}, we present  some differential properties of iterated integrals, including a version for the $D_n$ case. In Section \ref{part 2}, we prove Parts (2) of the theorems using Section \ref{ite int}, 
and we also say a little more about  the $D_n$ case. 

\medskip
\noindent{\bf Acknowledgment.} The author thanks Prof. Ian Anderson 
for 
many useful discussions, and Prof. L\'aszl\'o Feh\'er for some useful correspondences. He also thanks the referee for careful reading and detailed comments. 

\section{Solve chiral fields in DS-guage}\label{solve}

In this section, we solve \er{find phi} in the $C_n$ and $B_n$ groups $Sp(2n,\bc)$ and $O(2n+1,\bc)$ using some DS-gauges. Note that the corresponding preserved skew-symmetric and symmetric matrices are \er{Omega} and \er{Theta}. We follow \cite{FH} for choices of root vectors. 

In the $C_n$ case, we use $e_{-\a_i}=-E_{i+1,i}+E_{n+i,n+i+1}$ for $1\leq i\leq n-1$ and $e_{-\a_n}=E_{2n,n}$, where $E_{ij}$ is the matrix with a 1 at the $(i,j)$ position and zero everywhere else. Also $H_{\a_i}=E_{i,i}-E_{i+1,i+1}-E_{n+i,n+i}+E_{n+i+1,n+i+1}$ for $1\leq i\leq n-1$ and $H_{\a_n}=E_{n,n}-E_{2n,2n}$. 
We also choose the slice basis $s_j=E_{n-j+1,2n-j+1}$ for $1\leq j\leq n$. 

For concreteness,
we present the $C_3$ case. 
Writing \er{find phi} out in terms of one column vector $(\varphi_1,\cdots,\varphi_6)^T$ of $\Phi$, we have 
$$
\begin{pmatrix}
\varphi_1'\\
\varphi_2'\\
\varphi_3'\\
\varphi_4'\\
\varphi_5'\\
\varphi_6'
\end{pmatrix}=
\begin{pmatrix}
0 & 0 & 0 & I_3 & 0 & 0\\
-1 & 0 & 0 & 0 & I_2 & 0\\
0 & -1 & 0 & 0 & 0 & I_1\\
0 & 0 & 0 & 0 & 1 & 0\\
0 & 0 & 0 & 0 & 0 & 1\\
0 & 0 & 1 & 0 & 0 & 0
\end{pmatrix}
\begin{pmatrix}
\varphi_1\\
\varphi_2\\
\varphi_3\\
\varphi_4\\
\varphi_5\\
\varphi_6
\end{pmatrix}
=\begin{pmatrix}
I_3\varphi_4\\
-\varphi_1+I_2\varphi_5\\
-\varphi_2+I_1\varphi_6\\
\varphi_5\\
\varphi_6\\
\varphi_3
\end{pmatrix}
$$
Therefore regarding $\varphi_4(x)$ as free and writing it as $f(x)$, we have
\begin{align}
\varphi_4&=f\nm\\
\varphi_5&=\varphi_4'=f'\nm\\
\varphi_6&=\varphi_5'=f''\nm\\
\varphi_3&=\varphi_6'=f'''\nm\\
\varphi_2&=-\varphi_3'+I_1\varphi_6=-f^{(4)}+I_1 f''\label{phi2}\\
\varphi_1&=-\varphi_2'+I_2\varphi_5=f^{(5)}-I_1 f_1'''-I_1' f_1''
+I_2f'\label{phi1}
\end{align}
Actually the first equation $\varphi_1'=I_3\varphi_4$ gives
\begin{equation*}
f^{(6)}-I_1f^{(4)}-2I_1' f'''+(I_2-I_1'') f''+I_2' f'-I_3 f=0.
\end{equation*}
Putting 6 such columns together, we assume the solution to \er{find phi} in $C_3$ is 
\small
\begin{equation*}
\Phi(x)=\begin{pmatrix}
f_1^{(5)}-*
& f_2^{(5)}-* &  f_3^{(5)}-* &  f_4^{(5)}-* &  f_5^{(5)}-* &  f_6^{(5)}-*\\
-f_1^{(4)}+* & -f_2^{(4)}+* & -f_3^{(4)}+* & -f_4^{(4)}+* & -f_5^{(4)}+* & -f_6^{(4)}+*\\
f_1''' & f_2''' & f_3''' & f_4''' & f_5''' & f_6''' \\
f_1 & f_2 & f_3 & f_4 & f_5 & f_6\\
f_1' & f_2' & f_3' & f_4' & f_5' & f_6'\\
f_1'' & f_2'' & f_3'' & f_4'' & f_5'' & f_6''  
\end{pmatrix}
\end{equation*}
\normalsize
where the $*$'s are from \er{phi1} and \er{phi2} involving the $I$'s. 
This $\Phi(x)$ belongs to $Sp(6,\bc)$. 
Now the $I$'s are arbitrary local conservation laws and can be suitably adjusted for the $f$'s. We find the implication of $\Phi(x)\in Sp(6,\bc)$ on the $f$'s by concentrating on the last 4 rows in this case. 

More concretely, let $F(x)=(f_1(x),\cdots,f_6(x))$ be 
a vector of 6 functions of $x$. 
Then $F'''(x),F(x)$, $F'(x),F''(x)$ appear as the last 4 rows of a matrix $\Phi(x)$ in $Sp(6,\bc)$ under the conditions 
$$
C(F^{(i)},F^{(j)})=0\text{ for }0\leq i,j\leq 3 \text{ except }C(F''',F'')=1,
$$
where $C$ is defined in \er{Omega}. We will prove in the next section that the fewer conditions 
\er{c=0} and \er{c=1} in Theorem \ref{thm-C_n} Part (1) imply the above conditions. 

On the $y$-side, we have the transposed version $\Psi(y)$ for a vector function $G(y)=(g_1(y),\cdots,g_6(y))$. Therefore from \er{chirality}, we have
\begin{equation}\label{cn ex}
\Upsilon(x,y)
=\begin{pmatrix}
* & * & * & * & * & * \\
* & * & * & * & * & * \\
* & * 
& F'''\cdot G''' & F'''\cdot G & F'''\cdot G' & F'''\cdot G''\\ 
* & * 
& F\cdot G''' & F\cdot G & F\cdot G' & F\cdot G''\\
* & * 
& F'\cdot G''' & F'\cdot G & F'\cdot G' & F'\cdot G''\\
* & * 
 & F''\cdot G''' & F''\cdot G & F''\cdot G' & F''\cdot G''
\end{pmatrix},
\end{equation}
where the $*$'s are entries that contain the $I$'s for either $x$ or $y$. 

The solutions $u_i(x,y)$ should be some minors of $\Upsilon(x,y)$ invariant under some residual gauges. As explained in \cite{W-sym}, the residual gauges are 
$$
\Upsilon\mapsto \a\Upsilon\beta^{-1},\quad \text{where }\a=\a(x)\in N_+,\ \beta=\beta(y)\in N_-. 
$$
Here $N_+$ and $N_-$ are the positive and negative unipotent subgroups of $G=Sp(6,\bc)$ corresponding to $\fn_+$ and $\fn_-$. 

In our choice of basis, we see that $N_+$ is overall block-upper-triangular, but the the lower right block is lower triangular, and has 1's on the diagonal. On the contrary, $N_-$ is overall block-lower-triangular, but the the lower right block is upper triangular, and has 1's on the diagonal. Therefore the invariants under the residual gauges are those principal minors of $\Upsilon$ starting from position $(4,4)$ with increasing ranks going downward. Note that the entry $\Upsilon_{4,4}=F\cdot G$, and those bigger minors happen to be our $\tau_i$ for increasing $i$'s in \er{taup}, with the current $F(x)$ and $G(y)$. Comparison with 
$$
\exp\Big(\sum_{i=3}^n u_i H_{\a_i}\Big)=\text{Diag}\,(e^{u_1},e^{u_2-u_1},e^{u_3-u_2},e^{-u_1},e^{-u_2+u_1},e^{-u_3+u_2})
$$
gives $\tau_i=e^{-u_i}$ for $1\leq i\leq 3$. This gives our solutions in Theorem \ref{thm-C_n} Part (1). 
We will prove this directly in Section \ref{part 1}. 

In the $B_n$ case, we choose $e_{-\a_i}=-E_{i+1,i}+E_{n+i,n+i+1}$ for $1\leq i\leq n-1$ and $e_{-\a_n}=E_{2n,2n+1}-E_{2n+1,n}$. Also $H_{\a_i}=E_{i,i}-E_{i+1,i+1}-E_{n+i,n+i}+E_{n+i+1,n+i+1}$ for $1\leq i\leq n-1$ and $H_{\a_n}=2E_{n,n}-2E_{2n,2n}$. 
We also choose the slice basis $s_1=E_{n,2n+1}-E_{2n+1,2n}$ and $s_j=E_{n-j+1,2n-j+2}-E_{n-j+2,2n-j+1}$ for $2\leq j\leq n$. 

For concreteness, we present the $B_2$ case. 
Writing \er{find phi} out in terms of a column vector of $\Phi(x)$, we have 
$$
\begin{pmatrix}
\varphi_1'\\
\varphi_2'\\
\varphi_3'\\
\varphi_4'\\
\varphi_5'
\end{pmatrix}=
\begin{pmatrix}
0 & 0 & 0 & I_2 & 0\\
-1 & 0 & -I_2 & 0 & I_1\\
0 & 0 & 0 & 1 & 0\\
0 & 0 & 0 & 0 & 1\\
0 & -1 & 0 & -I_1 & 0 
\end{pmatrix}
\begin{pmatrix}
\varphi_1\\
\varphi_2\\
\varphi_3\\
\varphi_4\\
\varphi_5
\end{pmatrix}
=\begin{pmatrix}
I_2\varphi_4\\
-\varphi_1-I_2\varphi_3+I_1\varphi_5\\
\varphi_4\\
\varphi_5\\
-\varphi_2-I_1\varphi_4
\end{pmatrix}
$$
Therefore regarding $\varphi_3(x)$ as free and writing it as $f(x)$, we have
\begin{align}
\varphi_3&=f\nm\\
\varphi_4&=\varphi_3'=f'\nm\\
\varphi_5&=\varphi_4'=f''\label{last for B}\\
\varphi_2&=-\varphi_5'-I_1\varphi_4=-f'''-I_1 f'\nm\\
\varphi_1&=-\varphi_2'-I_2\varphi_3+I_1\varphi_5=f^{(4)}+2I_1 f''+I_1' f'-I_2 f\nm
\end{align}
and the first equation $\varphi_1'=I_2\varphi_4$ gives 
$$
f^{(5)}+2I_1f'''+3I_1'f''+(I_1''-2I_2)f'-I_2'f=0.
$$
Again we concentrate on the terms without the $I$'s. 
Let $F(x)=(f_1(x),\cdots,f_5(x))$ be a vector of 5 functions of $x$. Then $F(x),F'(x),F''(x)$ appear as the last 3 rows of a matrix in $O(5,\bc)$ under the conditions
$$
B(F^{(i)},F^{(j)})=0\text{ for }0\leq i,j\leq 2 \text{ except }B(F'',F'')=1,
$$
where $B$ is defined in \er{Theta}. We will prove in the next section that the fewer conditions 
\er{b=0} and \er{b=1} in Theorem \ref{thm-B_n} Part (1) imply the above conditions. 

Similarly in this case after incorporating the $\Psi(y)$, we have 
\begin{equation}\label{bn ex}
\Upsilon(x,y)
=\begin{pmatrix}
 * & * & * & * & * \\
 * & * & * & * & * \\
* & *  & F\cdot G & F\cdot G' & F\cdot G''\\
* & *  & F'\cdot G & F'\cdot G' & F'\cdot G''\\
* & *  & F''\cdot G & F''\cdot G' & F''\cdot G''
\end{pmatrix}.
\end{equation}

Again, the analysis of residual gauges tells us that the invariants are the the principal minors starting from position $(3,3)$ with increasing ranks, which are $\tau_i$'s \er{taup} in terms of our current $F(x)$ and $G(y)$. Comparison with 
$$
\exp\Big(\sum_{i=1}^2 u_i H_{\a_i}\Big)=\text{Diag}\,(e^{u_1},e^{2u_2-u_1},e^{-u_1},e^{-2u_2+u_1},1)
$$
gives $\tau_1=e^{-u_1}$ and $\tau_2=e^{-2u_2}$ and hence the solutions in Theorem \ref{thm-B_n} Part (1). Again this will be proved directly in Section \ref{part 1}. 

\section{Related algebraic properties}\label{ind int}

For their possible independent interest, we present these identities of principal minors of general linear, symplectic and orthogonal matrices, in this separate section. The author discovered these identities in his study of the Toda field theories for various Lie algebras. 

First the general linear case. Let $A\in GL(n,\bc)$ be a non-degenerate matrix. Let $S\subset \underline n:=\{1,2,\cdots,n\}$ be a subset of indices. Let $|S|$ denote the number of elements in $S$ and we often write $m$ for $|S|$. We also denote the complement of $S$ in $\underline n$ by $\bar S$. When needed, the enumeration of $S$ is written as $S=\{s_1,s_2,\cdots,s_m\}$ with $s_1<s_2<\cdots<s_m$. Let $M_S^A$ denote the principal minor of $A$ with indices in $S$. That is, if $A=(a_{ij})_{i,j=1}^n$, then $M^A_S=\det(a_{s_is_j})_{i,j=1}^m$. 

The efficient way to view minors is through exterior products. Let $\{e_i\}_{i=1}^n$ be the standard basis of $\bc^n$. Define 
$$
e_S:=e_{s_1}\wedge e_{s_2}\wedge \cdots \wedge e_{s_m}\in \wedge^m \bc^n.
$$

A matrix $A\in GL(n,\bc)$ defines a linear transformation of $\bc^n$, and naturally this induces a transformation of exterior powers $\wedge^* \bc^n$, for which we still use $A$ as the notation. Then the principal minor $M^A_S$ is nothing but the entry of the matrix for this induced transformation on $\wedge^m \bc^n$ in the $(e_S,e_S)$ position. Note that for the top exterior form, 
\begin{equation}\label{top}
A(e_1\wedge\cdots\wedge e_n)=(\det A)(e_1\wedge \cdots\wedge e_n).
\end{equation}

Let $C$ denote the cofactor matrix of $A$, that is, $C_{ij}$ is the minor of $A$ after deleting the $i$th row and the $j$th column, multiplied by $(-1)^{i+j}$. Note that when $A$ is invertible, $A^{-1}=\frac{1}{\det A}C^T$. 

\begin{proposition} \label{direct} For $A\in GL(n,\bc)$ with $C$ as its cofactor matrix and $S\subset \{1,\cdots,n\}$, we have 
\begin{equation*}
M^C_S=M^A_{\bar S}\cdot(\det A)^{|S|-1}.
\end{equation*}
\end{proposition}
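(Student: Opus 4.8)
The plan is to reduce the statement to Jacobi's classical identity for complementary minors of the inverse, and then to prove that identity with the exterior‑algebra machinery just set up.

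First I would rewrite everything in terms of $A^{-1}$. Since $A^{-1}=\frac{1}{\det A}C^{T}$, and since a principal minor is unchanged under transposition while scaling a matrix by $\lambda$ scales its $S$‑minor by $\lambda^{|S|}$, we get
\begin{equation*}
M^{A^{-1}}_{S}=\Big(\tfrac{1}{\det A}\Big)^{|S|}M^{C^{T}}_{S}=(\det A)^{-|S|}\,M^{C}_{S}.
\end{equation*}
Hence the proposition is equivalent to the identity $M^{A^{-1}}_{S}=(\det A)^{-1}\,M^{A}_{\bar S}$, valid for any $A\in GL(n,\mathbb{C})$ (with the convention $M_{\varnothing}=1$ for the empty index set).

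To prove this I would use the perfect pairing
$$
\wedge^{m}\mathbb{C}^{n}\times\wedge^{n-m}\mathbb{C}^{n}\longrightarrow\wedge^{n}\mathbb{C}^{n}\cong\mathbb{C},\qquad(\omega,\eta)\mapsto\omega\wedge\eta,
$$
where $m=|S|$. In the bases $\{e_{T}\}_{|T|=m}$ and $\{e_{U}\}_{|U|=n-m}$ this pairing is ``anti‑diagonal'': $e_{T}\wedge e_{U}=0$ unless $U=\bar T$, and $e_{T}\wedge e_{\bar T}=\epsilon_{T}\,(e_{1}\wedge\cdots\wedge e_{n})$ with $\epsilon_{T}=\pm1$ the sign of the shuffle permutation $(t_{1},\dots,t_{m},\bar t_{1},\dots,\bar t_{n-m})$; in particular the Gram matrix of the pairing is a signed permutation matrix, hence invertible. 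By functoriality of exterior powers together with \eqref{top},
$$
\big((\wedge^{m}A)\omega\big)\wedge\big((\wedge^{n-m}A)\eta\big)=(\wedge^{n}A)(\omega\wedge\eta)=(\det A)\,(\omega\wedge\eta),
$$
so $\wedge^{m}A$ and $\wedge^{n-m}A$ are mutually adjoint with respect to this pairing up to the factor $\det A$. Writing this in matrix form, the matrix of $\wedge^{n-m}A$ in the basis $\{e_{U}\}$ equals $\det(A)$ times a conjugate, by the Gram matrix, of the inverse of $\wedge^{m}A$ (up to a transpose that is immaterial for diagonal entries). Reading off the diagonal entry indexed by $\bar S$ — and recalling, as noted right after \eqref{top}, that the $(e_{U},e_{U})$ entry of $\wedge^{k}A$ is precisely $M^{A}_{U}$ — the conjugation by the signed permutation matrix matches the index $\bar S$ with its complement $S$ and contributes a factor $\epsilon_{S}^{2}=1$, giving $M^{A}_{\bar S}=\det(A)\,M^{A^{-1}}_{S}$. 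Combined with the first step this yields $M^{C}_{S}=(\det A)^{|S|}M^{A^{-1}}_{S}=M^{A}_{\bar S}(\det A)^{|S|-1}$.

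The only real work is the sign bookkeeping: one must check that the shuffle signs $\epsilon_{S}$ genuinely cancel — they enter squared, once from the Gram matrix and once from its inverse — and keep straight which complementary index set, $S$ of size $m$ or $\bar S$ of size $n-m$, labels which diagonal entry. I would also remark that $M^{A^{-1}}_{S}=(\det A)^{-1}M^{A}_{\bar S}$ is exactly Jacobi's theorem on minors of the adjugate specialized to principal minors, where the usual sign factor becomes $+1$ because the row and column index sets coincide.
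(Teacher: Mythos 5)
Your proposal is correct and rests on the same machinery as the paper's proof: the complementary wedge pairing on $\wedge^{m}\bc^{n}\times\wedge^{n-m}\bc^{n}$, functoriality of exterior powers, and the fact \eqref{top} that $A$ acts by $\det A$ on the top power. The paper packages this more economically by computing the single coefficient of $e_{S}\wedge e_{\bar S}$ in $e_{S}\wedge(Ae_{\bar S})=A\big((A^{-1}e_{S})\wedge e_{\bar S}\big)$, which makes the shuffle-sign bookkeeping you flag as ``the only real work'' disappear automatically.
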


\begin{proof} We have the following sequence of identities:
\begin{align*}
M^A_{\bar S}e_S\wedge e_{\bar S}&=e_S\wedge (M^A_{\bar S} e_{\bar S})=e_S\wedge (Ae_{\bar S})\\
&=A\big((A^{-1} e_S)\wedge e_{\bar S}\big)\overset\dag=(\det A)\bigg(\Big(\big(\frac{1}{\det A}C^{T}\big) e_{S}\Big)\wedge e_{\bar S}\bigg)\\
&=\frac{1}{(\det A)^{|S|-1}} M^C_S e_S\wedge e_{\bar S},
\end{align*}
where equality $\dag$ uses \er{top}. 
\end{proof}

\begin{remark}
The special case of Proposition \ref{direct} when $|S|=2$ is called the Jacobi identity \cite{LS-book}. 
\end{remark}

Now the symplectic and orthogonal cases. Recall $A\in C_n=Sp(2n,\bc)$ iff $A^T \Omega A=\Omega$ with $\Omega$ defined in \er{Omega}. 
Similarly $A\in O(n,\bc)$ iff $A^T\Theta A=\Theta$, where for $B_n=O(2n+1,\bc)$, $\Theta$ is given in \er{Theta}, and for $D_n=O(2n,\bc)$, $\Theta$ is defined as 
$
\Theta=\begin{pmatrix}
0& I_n\\
I_n & 0
\end{pmatrix}. 
$
Let $S\subset \{1,\cdots,2n(+1)\}$ be a subset. The following proposition relates the principal minors of $A$ itself, and we omit the superscript $A$ in the notation $M^A_S$ for simplicity. 
Let $\iota$ be the inversion of the first and second halves of indices, which would fix the last $2n+1$ in the $B_n$ case, that is, 
$$
\iota(k)=\begin{cases}
k+n & \text{if }1\leq k\leq n\\
k-n & \text{if }n+1\leq k\leq 2n\\
2n+1 &\text{if }k=2n+1
\end{cases}
$$

\begin{proposition}\label{swap} For $A\in Sp(2n,\bc)$, $A\in SO(2n+1,\bc)$ or $A\in SO(2n,\bc)$, and $S\subset \{1,\cdots,2n(+1)\}$, we have 
$$
M_S=M_{\iota(\bar S)},
$$
where $M_S$ stands for the principal minor of $A$ with indices in $S$, $\bar S$ is the complement of $S$, and $\iota(\bar S)$ is its image under the inversion $\iota$. 

In the orthogonal cases, if $\det A=-1$, then we have $M_S=-M_{\iota(\bar S)}$. 
\end{proposition}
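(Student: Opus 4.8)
The plan is to reduce the statement to the general linear case treated in Proposition \ref{direct}, exploiting the fact that the defining matrix of the group is, up to signs, the permutation matrix of the involution $\iota$. Throughout write $\Omega$ for the defining matrix ($\Omega$ of \er{Omega} in the symplectic case, $\Theta$ of \er{Theta} in the orthogonal cases), so that $A^T\Omega A=\Omega$ and $A$ is invertible of the relevant size $N=2n$ or $2n+1$.

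First I would rewrite $A^T\Omega A=\Omega$ as $A^T=\Omega A^{-1}\Omega^{-1}$. Since a square matrix and its transpose have the same principal minors, this already gives $M_S=M^A_S=M^{A^T}_S=M^{\Omega A^{-1}\Omega^{-1}}_S$. Next, reading off the explicit shapes of $\Omega$ and $\Theta$ one checks that $\Omega e_k=\varepsilon_k\, e_{\iota(k)}$ for signs $\varepsilon_k\in\{\pm1\}$, with all $\varepsilon_k=1$ in the orthogonal cases (where $\Theta$ is literally the permutation matrix of $\iota$). Hence $\Omega=PD$, where $P$ is the permutation matrix of the involution $\iota$ (so $P^{-1}=P$) and $D=\mathrm{diag}(\varepsilon_1,\dots,\varepsilon_N)$. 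Therefore $\Omega A^{-1}\Omega^{-1}=P(DA^{-1}D^{-1})P^{-1}$, and since conjugation by the diagonal $D$ leaves every principal minor unchanged while conjugation by the permutation matrix $P$ carries the principal minor on $S$ to the one on $\iota(S)$, we obtain $M^A_S=M^{A^{-1}}_{\iota(S)}$.

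It remains to convert a principal minor of $A^{-1}$ into a complementary minor of $A$, which is exactly where Proposition \ref{direct} enters. From $A^{-1}=\tfrac1{\det A}C^{T}$, i.e.\ $C=(\det A)(A^{-1})^{T}$, one gets $M^C_T=(\det A)^{|T|}M^{A^{-1}}_T$ for every index set $T$, whereas Proposition \ref{direct} reads $M^C_T=(\det A)^{|T|-1}M^A_{\bar T}$; comparing yields $M^{A^{-1}}_T=(\det A)^{-1}M^A_{\bar T}$. Taking $T=\iota(S)$ and using $\overline{\iota(S)}=\iota(\bar S)$ (since $\iota$ is a bijection), this gives $M^A_S=(\det A)^{-1}M^A_{\iota(\bar S)}$. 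As $\det A=1$ for $Sp(2n,\bc)$ and for both $SO$ groups, we get $M_S=M_{\iota(\bar S)}$; and when $A$ is orthogonal with $\det A=-1$, the same computation gives $M_S=-M_{\iota(\bar S)}$.

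There is no deep obstacle here; the key insight is just the chain transpose $\Rightarrow$ conjugate by $\Omega$ $\Rightarrow$ Proposition \ref{direct}, and the only thing requiring care is the sign bookkeeping in the middle step — verifying $\Omega e_k=\varepsilon_k e_{\iota(k)}$ from \er{Omega} and \er{Theta}, noting $\iota^2=\mathrm{id}$, and checking that conjugating by $D$ does nothing to principal minors (in the $S$-submatrix of $DBD^{-1}$ each index $s_i$ contributes the factor $\varepsilon_{s_i}$ once from its row and once from its column, for a total of $\prod_i\varepsilon_{s_i}^{\,2}=1$).
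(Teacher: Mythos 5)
Your proof is correct, and it takes a recognizably different route from the paper's. The paper proves Proposition \ref{swap} by a single self-contained exterior-algebra computation, the same device used for Proposition \ref{direct}: it writes $M_S\, e_S\wedge e_{\bar S}=(Ae_S)\wedge e_{\bar S}=A\bigl(e_S\wedge (A^{-1}e_{\bar S})\bigr)$, substitutes $A^{-1}=\mp\Omega A^T\Omega$, and tracks the signs through the facts $\Omega e_I=\pm e_{\iota(I)}$, $\Omega^2e_I=-e_I$ (resp.\ $\Theta e_I=e_{\iota(I)}$, $\Theta^2e_I=e_I$), with $\det A$ entering exactly once via \eqref{top}. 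You instead start from the transposed identity $A^T=\Omega A^{-1}\Omega^{-1}$, factor $\Omega=PD$ into a permutation times a sign-diagonal, reduce via three elementary invariances of principal minors (under transpose, diagonal conjugation, permutation conjugation) to the equality $M^A_S=M^{A^{-1}}_{\iota(S)}$, and only then invoke Proposition \ref{direct} --- in its equivalent Jacobi form $M^{A^{-1}}_T=(\det A)^{-1}M^A_{\bar T}$ --- to convert the minor of the inverse into a complementary minor of $A$. The trade-off: the paper's argument is shorter and keeps everything inside one wedge-product identity, but it silently relies on a slightly delicate step (only the $e_{\iota(\bar S)}$-component of $A^Te_{\iota(\bar S)}$ survives the final wedge); your version isolates that content into the cleanly reusable statement $M^A_S=M^{A^{-1}}_{\iota(S)}$ and makes the dependence on Proposition \ref{direct} explicit, at the cost of more bookkeeping with $P$ and $D$. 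Your sign checks ($\varepsilon_k=-1$ for $k\le n$ in the symplectic case, all $\varepsilon_k=1$ for $\Theta$, and $\prod_i\varepsilon_{s_i}^2=1$ under diagonal conjugation) are all accurate, as is the identity $\overline{\iota(S)}=\iota(\bar S)$, so the argument is complete.
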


\begin{proof} For $A\in C_n$, we have $A^{-1}=\Omega^{-1}A^T\Omega=-\Omega A^T\Omega$. For $A\in B_n$ or $D_n$, we have $A^{-1}=\Theta^{-1}A^T \Theta=\Theta A^T\Theta$. Now for any index set $I$, $\Omega e_I=\pm e_{\iota(I)}$ and $\Omega^2 e_I=-e_I$. Also $\Theta e_I=e_{\iota(I)}$ and $\Theta^2 e_I=e_I$. Below we show the details for the symplectic case, and the special orthogonal cases are even simpler. We have 
\begin{align*}
M_{S}e_S\wedge e_{\bar S}&=(M_{S}e_S)\wedge e_{\bar S}=(Ae_S)\wedge e_{\bar S}\\
&=A\big(e_S\wedge (A^{-1}e_{\bar S})\big)\overset{\ddagger}=e_{S}\wedge (-\Omega A^T\Omega)e_{\bar S}\\
&=e_S\wedge \big((-\Omega)A^T(\pm e_{\iota(\bar S)})\big)=e_S\wedge (-\Omega)(\pm M_{\iota(\bar S)} e_{\iota(\bar S)})\\
&=e_S\wedge (M_{\iota(\bar S)})e_{\bar S}=(M_{\iota(\bar S)}) e_S\wedge e_{\bar S}, 
\end{align*}
where the equality $\ddag$ uses \er{top} and that $\det A=1$ for $A\in Sp(2n,\bc)$ or $SO(n,\bc)$.

Note that when $\det A=-1$ in the orthogonal cases, we have the negative sign coming in at equality $\ddag$.
\end{proof}

\begin{remark} Without going into details, we note that Propositions \ref{direct} and \ref{swap} can be generalized to general minors, not necessarily principal. More specifically, let $S,T\subset \{1,\cdots,n\}$ be subsets with the same cardinality, then for $A$ nondegenerate, we have 
$$
M^C_{S,T}=\pm M_{\bar S,\bar T}^A \cdot (\det A)^{|S|-1}
$$
for a carefully determined sign. Similarly, let $S, T \subset \{1,\cdots,2n(+1)\}$ be subsets with the same cardinality, then for $A$ symplectic or orthogonal, we have 
$$
M_{S,T}=\pm M_{\iota(\bar S)\iota(\bar T)}.
$$
\end{remark}

\section{Parts (1) of Theorems}\label{part 1}

In this section, we first present a simple lemma about the solutions $u_i$ to \er{toda} in terms of some related functions, which we call $\sigma_i$. For completeness and the reader's convenience, we prove Proposition \ref{general} for general $\tau_i(x,y)$ in \er{taup} for any function vectors $F(x)$ and $G(y)$ using our Proposition \ref{direct}, and  we provide a proof of Theorem \ref{thm-A_n} Part (1).
Then using Proposition \ref{swap}, we also prove Parts (1) of Theorems \ref{thm-C_n} and \ref{thm-B_n}. The methods here are that we apply the Gram-Schmidt process to obtain a symplectic or orthogonal matrix under our conditions. At the end of this section, we add some remarks about taking care of different coefficients in \er{toda}. 

Define 
\begin{equation}\label{def tau}
\sigma_i:=e^{-u_i},\quad 1\leq i\leq n. 
\end{equation}
Also for a function $v=v(x,y)$, define 
\begin{equation}\label{def dd}
DD(v):=v\cdot v_{xy}-v_x\cdot v_y.
\end{equation}
\begin{lemma}[\cite{BBT}]\label{use tau} The $u_i$ satisfy \er{toda} if and only if
\begin{equation}\label{in terms of tau}
DD(\sigma_i)=\prod_{j\neq i}\sigma_j^{-a_{ij}}.
\end{equation}
\end{lemma}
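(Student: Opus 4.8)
The plan is to unwind the definitions and convert the PDE system \er{toda} into the stated Hirota-type bilinear form by a direct computation. Recall $\sigma_i = e^{-u_i}$, so that $u_i = -\log\sigma_i$. First I would compute the derivatives of $u_i$ in terms of $\sigma_i$: we have $u_{i,x} = -\sigma_{i,x}/\sigma_i$ and $u_{i,y} = -\sigma_{i,y}/\sigma_i$, and then
\[
u_{i,xy} = -\frac{\sigma_{i,xy}}{\sigma_i} + \frac{\sigma_{i,x}\sigma_{i,y}}{\sigma_i^2} = -\frac{\sigma_i\sigma_{i,xy} - \sigma_{i,x}\sigma_{i,y}}{\sigma_i^2} = -\frac{DD(\sigma_i)}{\sigma_i^2},
\]
using the definition \er{def dd} of $DD$. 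So the left-hand side of \er{toda} becomes $u_{i,xy} = -DD(\sigma_i)/\sigma_i^2$.

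Next I would rewrite the right-hand side. By \er{toda}, $-e^{\rho_i} = -\exp\big(\sum_j a_{ij} u_j\big)$. Since $u_j = -\log\sigma_j$, we get $\sum_j a_{ij} u_j = -\sum_j a_{ij}\log\sigma_j$, hence $e^{\rho_i} = \prod_j \sigma_j^{-a_{ij}}$. Now separate out the $j=i$ term: the Cartan matrix has $a_{ii}=2$ (since $a_{ii}=\alpha_i(H_{\alpha_i})=2$ by the normalization in the introduction), so $\sigma_i^{-a_{ii}} = \sigma_i^{-2}$, giving
\[
e^{\rho_i} = \sigma_i^{-2}\prod_{j\neq i}\sigma_j^{-a_{ij}}.
\]
Therefore \er{toda} reads $-DD(\sigma_i)/\sigma_i^2 = -\sigma_i^{-2}\prod_{j\neq i}\sigma_j^{-a_{ij}}$, and multiplying both sides by $-\sigma_i^2$ yields exactly \er{in terms of tau}. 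Each step is reversible — multiplying by the nonvanishing factor $-\sigma_i^2$, and exponentiating/taking logs of strictly positive quantities — so the equivalence holds in both directions, as claimed.

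There is no real obstacle here; the lemma is essentially a bookkeeping identity, and the only point that requires a moment's care is the separation of the diagonal term $a_{ii}=2$ from the product, which is what produces the $\sigma_i^{-2}$ that cancels the $\sigma_i^2$ denominator coming from $u_{i,xy}$. I would also note in passing (or leave implicit, since it is the convention of the paper) that $\sigma_i = e^{-u_i}$ is automatically nowhere zero, so all the divisions above are legitimate. If one wants to be scrupulous one could remark that the computation of $u_{i,xy}$ assumes $u_i$ is $C^2$, which is part of the standing smoothness hypothesis on solutions of \er{toda}.
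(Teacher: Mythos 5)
Your proposal is correct and follows essentially the same route as the paper: compute $u_{i,xy}=-DD(\sigma_i)/\sigma_i^2$ from $u_i=-\log\sigma_i$, then peel off the diagonal term $a_{ii}=2$ so that $e^{\rho_i}=\sigma_i^{-2}\prod_{j\neq i}\sigma_j^{-a_{ij}}$ and the $\sigma_i^{\pm 2}$ factors cancel. The reversibility remark and the care about nonvanishing of $\sigma_i$ are fine additions but do not change the argument.
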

\begin{proof}
By \er{def tau}, $u_i=-\log \sigma_i$. Then 
\begin{align}
u_{i,x}&=-\frac{\sigma_{i,x}}{\sigma_i},\nm\\
u_{i,xy}&=-\frac{\sigma_{i,xy}\sigma_i-\sigma_{i,x}\sigma_{i,y}}{\sigma_i^2}=-\frac{DD(\sigma_i)}{\sigma_i^2}\label{explain}
\end{align}
It is clear that \er{in terms of tau} holds if and only if 
$$
u_{i,xy}=-\exp\Big(2u_i+\sum_{j\neq i}a_{ij}u_j\Big)=-\exp\Big(\sum_{j=1}^n a_{ij}u_j\Big)
$$
since $a_{ii}=2$, which is \er{toda}. 
\end{proof}




\begin{proposition}\label{general} For the $\tau_i$ defined in \er{taup} for any function vectors $F(x)$ and $G(y)$ of the same length, we have 
\begin{equation*}
DD(\tau_i)=\tau_{i-1}\tau_{i+1},\quad i\geq 1. 
\end{equation*}
\end{proposition}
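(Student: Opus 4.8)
The plan is to interpret $\tau_i$ as a principal minor of a single matrix and then apply Proposition \ref{direct} (the Jacobi identity case $|S|=2$). Concretely, fix $i\geq 1$ and consider the $(i+1)\times(i+1)$ matrix $M=\big(\p_x^j\p_y^k(F\cdot G)\big)_{j,k=0}^{i}$, so that $\tau_{i+1}=\det M$. The key observation is that several of the needed $\tau$'s appear as principal minors of $M$: the top-left $i\times i$ block has determinant $\tau_i$; the principal minor on indices $\{1,\dots,i-1\}$ (dropping the last row and column) is $\tau_{i-1}$; and more importantly, differentiation in $x$ or $y$ shifts rows or columns. Indeed, since the $(j,k)$ entry is $\p_x^j\p_y^k(F\cdot G)$, applying $\p_x$ to the $\tau_i$ sub-block (rows/columns $0,\dots,i-1$) and using multilinearity of the determinant in rows, only the differentiation of the last row $j=i-1$ survives (the others produce repeated rows), turning it into row $j=i$; similarly $\p_y\tau_i$ replaces the last column $k=i-1$ by column $k=i$, and $\p_x\p_y\tau_i$ does both.

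The next step is to express $\tau_i$, $\tau_i$-with-last-row-bumped, $\tau_i$-with-last-column-bumped, and $\tau_i$-with-both-bumped as the four principal minors of $M$ obtained by deleting, from the index set $\{0,1,\dots,i\}$, either row/column $i$ or row/column $i-1$. Writing $C$ for the cofactor matrix of $M$, these four quantities are, up to sign, the four cofactors $C_{ii}$, $C_{i,i-1}$, $C_{i-1,i}$, $C_{i-1,i-1}$ — this is just the observation that deleting the last row and the $(i-1)$st or $i$th column from $M$ gives (after accounting for the shift coming from reindexing the surviving columns) the differentiated minors above. Then the Jacobi identity, i.e. Proposition \ref{direct} with $S=\{i-1,i\}$ and $n$ replaced by $i+1$, reads
$$
M^C_{\{i-1,i\}}=M^M_{\overline{\{i-1,i\}}}\cdot(\det M)^{1},
$$
that is $C_{i-1,i-1}C_{ii}-C_{i-1,i}C_{i,i-1}=\tau_{i-1}\cdot\tau_{i+1}$, since the complementary principal minor of $M$ on indices $\{0,\dots,i-2\}$ is exactly $\tau_{i-1}$. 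Matching the left-hand side with $\p_x\p_y\tau_i\cdot\tau_i-\p_x\tau_i\cdot\p_y\tau_i=DD(\tau_i)$, after carefully tracking the signs $(-1)^{i+i}$, $(-1)^{(i-1)+i}$ etc. in the cofactors, yields $DD(\tau_i)=\tau_{i-1}\tau_{i+1}$. The boundary cases are handled by the conventions $\tau_0=1$ and (for $i=1$) a direct check that $DD(F\cdot G)=\tau_0\tau_2=\det\big((\p_x^j\p_y^k(F\cdot G))_{j,k=0}^1\big)$, which is immediate.

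The main obstacle I expect is bookkeeping: verifying precisely that $\p_x\tau_i$, $\p_y\tau_i$, $\p_x\p_y\tau_i$ equal the claimed cofactors of $M$ \emph{with the correct signs}, since the reindexing of surviving columns after deleting an interior index $i-1$ introduces a sign, and one must confirm that the signs conspire so that $DD(\tau_i)$ (which has a genuine minus sign built in) matches the $2\times 2$ determinant $M^C_{\{i-1,i\}}$ of the cofactor matrix (which also has a minus sign built into its definition via the checkerboard pattern $(-1)^{i+j}$). An alternative that avoids some of this is to argue the identity first for $F,G$ with independent entries (so that $F\cdot G$ is a "generic" function of $x,y$, and in fact $\tau_i$ is then a Wronskian-type Hankel-ish determinant in the sense of the Desnanot–Jacobi / Dodgson condensation identity), prove it there by Dodgson condensation, and then note both sides are polynomial (differential-polynomial) in the $f$'s and $g$'s so the identity persists for all $F,G$ by specialization; but since Proposition \ref{direct} is already available, the cofactor route is the most economical and I would take it.
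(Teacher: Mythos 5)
Your proposal is correct and follows essentially the same route as the paper: identify $\tau_i$, $\tau_{i,x}$, $\tau_{i,y}$, $\tau_{i,xy}$ with the cofactors $C_{i,i}$, $-C_{i-1,i}$, $-C_{i,i-1}$, $C_{i-1,i-1}$ of $T_{i+1}$, and then apply Proposition \ref{direct} with $S=\{i-1,i\}$ to get $M^C_{\{i-1,i\}}=\tau_{i-1}\tau_{i+1}$. (Only note the small slip early on where you write the index set $\{1,\dots,i-1\}$ for the minor giving $\tau_{i-1}$; it should be $\{0,\dots,i-2\}$, as you in fact use correctly later.)
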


\begin{proof} It is clear that $DD(\tau_1)=\tau_2$ by definition.  Let 
$$T_i=\big(\p_x^{j}\p_y^{k} (F\cdot G)\big)_{j,k=0}^{i-1}$$
 be the matrix in \er{taup} such that $\tau_i=\det T_i$. Then for $2\leq i\leq n$,
we have\begin{align*}
T_{i+1}&=\begin{pmatrix}
T_i &  F^{(\leq i-1)}\cdot G^{(i)}\\
F^{(i)}\cdot G^{(\leq i-1)} &  F^{(i)}\cdot G^{(i)}
\end{pmatrix}\\
&=\begin{pmatrix}
T_{i-1} & F^{(\leq i-2)}\cdot G^{(i-1)}& F^{(\leq i-2)}\cdot G^{(i)} \\
F^{(i-1)}\cdot G^{(\leq i-2)} & F^{(i-1)}\cdot G^{(i-1)} & F^{(i-1)}\cdot G^{(i)}  \\
F^{(i)}\cdot G^{(\leq i-2)} & F^{(i)}\cdot G^{(i-1)}  & F^{(i)}\cdot G^{(i)} 
\end{pmatrix}.
\end{align*}
Let $C$ denote the cofactor matrix of $T_{i+1}$. We use indices for rows and columns from $0$ to $i$. Then it is clear that $\tau_i=C_{i,i}$. Through simple calculations, we see that $\tau_{i,x}=-C_{i-1,i}$, that is, the minor of $T_{i+1}$ after deleting the second last row and last column. Similarly $\tau_{i,y}=-C_{i,i-1}$ and $\tau_{i,xy}=C_{i-1,i-1}$. 

Therefore by Proposition \ref{direct}, we have 
\begin{align*}
DD(\tau_i)&=\tau_i\cdot \tau_{i,xy}-\tau_{i,x}\tau_{i,y}\\
&=C_{i,i}C_{i-1,i-1}-C_{i-1,i}C_{i,i-1}=M^C_{\{i-1,i\}}\\
&=\det(T_{i-1})\det(T_{i+1})^{2-1} =\tau_{i-1}\tau_{i+1}.
\end{align*}
\end{proof}


\begin{proof}[Proof of Theorem \ref{thm-A_n} Part (1)] The Cartan matrix for $A_n$ is 
$$
\begin{pmatrix}
2 & -1 & & &\\
-1 & 2 & -1 & & \\
 & \ddots & \ddots & \ddots& \\
 & &-1 & 2 & -1\\
 & & & -1 & 2
 \end{pmatrix}.
$$
In view of Lemma \ref{use tau}, we need to prove that the $\tau_i$ in \er{taup} for our current $F(x)$ and $G(y)$ satisfy 
\begin{equation}\label{tau A_n}
\begin{split}
DD(\tau_i)&=\tau_{i-1}\tau_{i+1},\quad 1\leq i\leq n-1\\
DD(\tau_n)&=\tau_{n-1}
\end{split}
\end{equation}

This follows directly from Proposition \ref{general}, as soon as we realize the following for the last one.  
That is, $\tau_{n+1}=W(F)W(G)$ as defined in \er{taup} and hence it is 1 by the Wronskian conditions \er{w=1}. 
\end{proof}



To apply Proposition \ref{swap} in the symplectic 
case, we first present a proposition. 
\begin{proposition}\label{Gram-Schmidt} Let $F(x)=(f_0(x),\cdots,f_{2n-1}(x))$ be a vector of $2n$ functions of $x$ satisfying \er{c=0} and \er{c=1}. Then the following $(n+1)\times 2n$ matrix
\begin{equation}\label{lower}
\begin{pmatrix}
F^{(n)}\\
F\\
F'\\
\vdots\\
F^{(n-1)}
\end{pmatrix}
\end{equation}
can be completed into a symplectic matrix $\Phi(x)\in Sp(2n, \bc)$ by adding $n-1$ rows on the top. 
\end{proposition}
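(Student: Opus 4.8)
The plan is to restate the conclusion as the existence of $2n$ row vectors forming a symplectic basis of $\mathbb{C}^{2n}$, and then to build the missing $n-1$ rows by a symplectic Gram--Schmidt procedure. Recall that a $2n\times 2n$ matrix $\Phi$ with rows $v_1,\dots,v_{2n}$ lies in $Sp(2n,\mathbb{C})$ exactly when $\Phi\Omega\Phi^T=\Omega$ (equivalent to $\Phi^T\Omega\Phi=\Omega$ because $\Omega^2=-I_{2n}$), i.e. when $C(v_i,v_j)=\Omega_{ij}$ for all $i,j$; concretely, $C(v_i,v_{i+n})=1$ for $1\le i\le n$ and $C(v_i,v_j)=0$ for every other pair. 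In \eqref{lower} the rows to be placed in positions $n,n+1,\dots,2n$ of $\Phi$ are $v_n=F^{(n)}$ and $v_{n+1+a}=F^{(a)}$ for $0\le a\le n-1$, so the relations that must hold among them are $C(F^{(a)},F^{(b)})=0$ for $0\le a,b\le n-1$, $C(F^{(n)},F^{(a)})=0$ for $0\le a\le n-2$, and $C(F^{(n)},F^{(n-1)})=1$. Accordingly the proof has three steps: (a) deduce these identities from \eqref{c=0}--\eqref{c=1}; (b) show $F^{(0)},\dots,F^{(n)}$ are linearly independent; (c) produce rows $v_1,\dots,v_{n-1}$ completing the list to a symplectic basis.

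For step (a), put $\omega(a,b)=C(F^{(a)},F^{(b)})$ and use skew-symmetry ($\omega(a,a)=0$, $\omega(b,a)=-\omega(a,b)$) together with $\frac{d}{dx}\omega(a,b)=\omega(a+1,b)+\omega(a,b+1)$. Starting from \eqref{c=0}, I would prove, by induction on $b$ (with $a\le b$ and treating $b-a=0$ trivially, $b-a=1$ directly from \eqref{c=0}, and $b-a\ge 2$ via $\omega(a,b)=\frac{d}{dx}\omega(a,b-1)-\omega(a+1,b-1)$ and the inductive hypothesis), that $\omega(a,b)=0$ whenever $a+b\le 2n-2$. This gives all the required vanishings, while $C(F^{(n)},F^{(n-1)})=1$ is just \eqref{c=1} rewritten. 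In particular $L:=\mathrm{span}(F^{(0)},\dots,F^{(n-1)})$ is an isotropic subspace and $F^{(n)}$ is $C$-orthogonal to $F^{(0)},\dots,F^{(n-2)}$.

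For step (b): if $\sum_{a=0}^{n-1}\lambda_aF^{(a)}=0$ with constants not all zero, let $m$ be maximal with $\lambda_m\ne 0$; then $F^{(m)}$, and (after repeatedly differentiating the relation and re-using it to eliminate the $F^{(m)}$-term) also every $F^{(k)}$ with $k\ge m$, lies in $\mathrm{span}(F^{(0)},\dots,F^{(m-1)})\subseteq\mathrm{span}(F^{(0)},\dots,F^{(n-2)})$, so pairing $F^{(n)}$ with $F^{(n-1)}$ and using step (a) yields $1=C(F^{(n)},F^{(n-1)})=0$, a contradiction; the same pairing shows $F^{(n)}\notin L$, so the $n+1$ vectors are independent and $L$ is Lagrangian. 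For step (c) I would run a symplectic Gram--Schmidt. Since $C$ is nondegenerate and $L^\perp=L$, the linear map $w\mapsto\bigl(C(w,F^{(0)}),\dots,C(w,F^{(n-1)})\bigr)$ is onto $\mathbb{C}^n$, so I can choose $v_1,\dots,v_{n-1}$ with $C(v_j,F^{(a)})=\delta_{a,j-1}$ for $0\le a\le n-1$; then replace each $v_j$ by $v_j+C(v_j,F^{(n)})\,F^{(n-1)}$ to force $C(v_j,F^{(n)})=0$ (harmless since $C(F^{(n-1)},F^{(a)})=0$ for $a\le n-1$); finally, processing $j=2,\dots,n-1$ in order, subtract $\sum_{i<j}C(v_i,v_j)\,F^{(i-1)}$ from $v_j$ to get $C(v_i,v_j)=0$ for $i<j\le n-1$ (harmless since $C(F^{(i-1)},F^{(a)})=0$ for $i-1,a\le n-1$ and $C(F^{(i-1)},F^{(n)})=0$ for $i-1\le n-2$). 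The resulting $2n$ rows have Gram matrix $\Omega$ with respect to $C$, hence are independent and assemble into $\Phi(x)\in Sp(2n,\mathbb{C})$ extending \eqref{lower}; the construction is algebraic in $F$ and its derivatives, so $\Phi$ depends smoothly on $x$.

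The combinatorial core is step (a): this is where the precise form of the hypotheses \eqref{c=0}--\eqref{c=1} enters, and the bound $a+b\le 2n-2$ is exactly what makes the induction close. The single indispensable input is \eqref{c=1}: it is what puts $F^{(n)}$ outside $L$ and what fixes the last conjugate pair of the symplectic basis. I expect the main bookkeeping obstacle to be step (c): the three rounds of corrections (first matching the pairings against $F^{(0)},\dots,F^{(n-1)}$, then killing the pairing with $F^{(n)}$, then achieving mutual orthogonality among the new rows) must be ordered so that each round preserves what the earlier rounds achieved, which repeatedly uses the isotropy of $L$ from step (a). The $B_n$ analogue will require the parallel statement for the symmetric form $B$ of \eqref{Theta}, proved in the same way.
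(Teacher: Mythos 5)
Your proposal is correct and follows essentially the same route as the paper: the combinatorial core is the identical induction on the level $a+b$, using $\frac{d}{dx}C(F^{(a)},F^{(b)})=C(F^{(a+1)},F^{(b)})+C(F^{(a)},F^{(b+1)})$ together with skew-symmetry to propagate \eqref{c=0}--\eqref{c=1} to the vanishing of all pairings of level at most $2n-2$, followed by a symplectic Gram--Schmidt completion. The only divergence is in the completion step---the paper builds the missing $n-1$ rows explicitly from the higher derivatives $F^{(n+1)},\dots,F^{(2n-1)}$ corrected by the quantities $\t I_j=C(F^{(n+j-1)},F^{(n+j)})$ (which it wants anyway as local conservation laws, and which forces it to also compute the pairings at levels $2n-1$ and $2n$), whereas you choose the new rows abstractly via nondegeneracy of $C$ modulo the Lagrangian $L$ and then orthogonalize; both work, and your handling of linear independence is no less careful than the paper's appeal to ``generic cases.''
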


\begin{proof}
We apply the Gram-Schmidt process in the symplectic case to find a symplectic matrix $\Phi$ using the vectors $F^{(i)}$, $0\leq i\leq 2n-1$, which in generic cases are lienarly independent. We write the row vectors of $\Phi$ by $\Phi^i$ for $1\leq i\leq 2n$. Then $\Phi$ is symplectic iff $C(\Phi^{i},\Phi^{n+i})=1=-C(\Phi^{n+i},\Phi^{i})$ for $1\leq i\leq n$, and all the other $C(\Phi^i,\Phi^j)=0$, where $C$ is defined in \er{Omega}. 

We let $\Phi^{n+i}=F^{(i-1)}$ for $1\leq i\leq n$ and $\Phi^{n}=F^{(n)}$ to contain the lower rows as specified in \er{lower}. We will show that this is legitimate by our conditions \er{c=0} and \er{c=1}. We will then define the $\Phi^{n-i}$ through the Gram-Schmidt process applied successively to the $F^{(n+i)}$ for $1\leq i\leq n-1$. 

Now define 
\begin{equation}\label{def I_j}
\t I_j=C(F^{(n+j-1)},F^{(n+j)}),\quad 1\leq j\leq n-1
\end{equation}
to be functions of $x$. 
We show that $2n-1$ conditions \er{c=0}, \er{c=1} and \er{def I_j} determine all the $C(i,j):=C(F^{(i)},F^{(j)})$ for $0\leq i\leq j\leq 2n-1$. 

Call $l:=i+j$ the level of $C(i,j)$. We run increasing induction on $l$
and decreasing induction on the first index $i$. 

Since
$
\frac{d}{dx}C(i,j)=C(i+1,j)+C(i,j+1), 
$
we have
\begin{equation}\label{induction}
C(i,j+1)=\frac{d}{dx}C(i,j)-C(i+1,j),
\end{equation}
where the two terms on the right have either a lower level or a bigger first index. Therefore we only need to know the leading term with the biggest $i$ at each level. When the level is odd, say $2k+1$, the leading term is $C(k,k+1)$, and these are defined precisely by our conditions \er{c=0}, \er{c=1} and \er{def I_j}. When the level is even, say $2k$, then the leading term $C(k,k)=0$ since $C$ is skew-symmetric. 

Furthermore, from the induction procedure \er{induction}, we have
\begin{align}
C(i,j)&=0, &&\text{if }0\leq i+j\leq 2n-2,\label{more c=0}\\
C(i,j)&=\pm 1, &&\text{if }i+j=2n-1,\label{more c=1}\\
C(i,j)&=0,&&\text{if }i+j=2n,\label{level 2n}
\end{align}
since in our defining conditions \er{c=0}, \er{c=1} and \er{def I_j}, the first non-zero $C(i,j)$ appear at level $2n-1$ as $-1$, and the leading term at level $2n$ is $C(n,n)=0$. 

Therefore it is legitimate for $\Phi$ to have the lower part \er{lower}. To illustrate the Gram-Schmidt process, let's consider the next row $\Phi^{n-1}$ using the vector function $F^{(n+1)}$. 

We know $C(F^{(n+1)},F^{(j)})=0$ for $0\leq j\leq n-3$ and $j=n-1$, and also $C(F^{(n+1)},F^{(n-2)})=-1$ by \er{more c=0}, \er{level 2n} and \er{more c=1}. Also $C(F^{(n+1)},F^{(n)})=-\t I_1$ by \er{def I_j}, therefore we have
$$
\Phi^{n-1}=-(F^{(n+1)}-\t I_1\cdot F^{(n-1)}),
$$
so that $C(\Phi^{n-1},\Phi^n)=C(\Phi^{n-1},F^{(n)})=0$ and $C(\Phi^{n-1},\Phi^{2n-1})=C(\Phi^{n-1},F^{(n-2)})\\=1$. 

This process can be continued to fill up the matrix $\Phi$ in terms of the $F^{(k)}$ and the $\t I_j$. 
\end{proof}

\begin{proof}[Proof of Theorem \ref{thm-C_n} Part (1)]
For $n\geq 2$, the Cartan matrix for $C_n$ is 
$$
\begin{pmatrix}
2 & -1 & & &\\
-1 & 2 & -1 & & \\
 & \ddots & \ddots & \ddots& \\
 & &-1 & 2 & -1\\
 & & & -2 & 2
 \end{pmatrix}
$$
Therefore by Lemma \ref{use tau}, we need to prove that the $\tau_i$ in \er{taup} for our function vectors $F(x)$ and $G(y)$ in Theorem \ref{thm-C_n} satisfy 
\begin{equation}\label{tau C_n}
\begin{split}
DD(\tau_i)&=\tau_{i-1}\tau_{i+1},\quad 1\leq i\leq n-1\\
DD(\tau_n)&=\tau_{n-1}^2.
\end{split}
\end{equation}

After Proposition \ref{general}, only the last equation needs a demonstration. 

Proposition \ref{Gram-Schmidt} shows that we have a symplectic matrix $\Phi(x)$ with the lower rows to be given by \er{lower}. The same can be done for the $y$-side to get a symplectic matrix $\Psi(y)$ with the last columns to be given by the $(G^{(j)})^T$ in a suitable order. Therefore $\Upsilon(x,y)=\Phi(x)\cdot \Psi(y)$ is a symplectic matrix with its lower-right corner to be given by the $F^{(i)}\cdot G^{(j)}$ in a suitable order. (See \er{cn ex} for an example.) 

By Proposition \ref{general}, we have $DD(\tau_n)=\tau_{n-1}\tau_{n+1}$. 
But $\tau_{n+1}$ is the principal minor $M_{\{n,n+1,\cdots,2n\}}$ for the symplectic matrix $\Upsilon=\Phi(x)\cdot \Psi(y)$. By Proposition \ref{swap}, this is equal to $M_{\{n+1,\cdots,2n-1\}}$, which is seen to be $\tau_{n-1}$. Therefore $\tau_{n+1}=\tau_{n-1}$, and the last equation in \er{tau C_n} is proved.  
\end{proof}



Very similarly, we can prove the $B_n$ case. 
\begin{proof}[Proof of Theorem \ref{thm-B_n} Part (1)] 
Assume $n\geq 2$. The Cartan matrix for $B_n$ is 
$$
\begin{pmatrix}
2 & -1 & & &\\
-1 & 2 & -1 & & \\
 & \ddots & \ddots & \ddots& \\
 & &-1 & 2 & -2\\
 & & & -1 & 2
 \end{pmatrix}
$$

According to our solution formula \er{B special} in this case, we let $\sigma_i=\tau_i$ for $1\leq i\leq n-1$ and $\sigma_n=\sqrt{\tau_n}$, with the $\tau_i$ defined in \er{taup} for our function vectors $F(x)$ and $G(y)$ in Theorem \ref{thm-B_n}. 
Therefore by Lemma \ref{use tau}, we need to prove that 
\begin{equation}\label{tau B_n}
\begin{split}
DD(\tau_i)&=\tau_{i-1}\tau_{i+1},\quad 1\leq i\leq n-2\\
DD(\tau_{n-1})&=\tau_{n-2}(\sqrt{\tau_n})^2\\
DD(\sqrt{\tau_n})&=\pm \frac{1}{2}\tau_{n-1}.
\end{split}
\end{equation}
(To accommodate the $\pm\frac{1}{2}$ in the last equation, see Remark \ref{coeff's}.) 

After Proposition \ref{general}, only the last equation needs a demonstration. 

Similarly to Proposition \ref{Gram-Schmidt}, we  first complete the matrix with the last rows as $F,\cdots,F^{(n)}$ to an orthogonal one. 
We define 
\begin{equation}\label{def I_j-B}
\t I_j=B(F^{(n+j)},F^{(n+j)}),\quad 1\leq j\leq n
\end{equation}
to be functions of $x$. 
Then similarly to the proof of Proposition \ref{Gram-Schmidt}, the conditions \er{b=0}, \er{b=1} and \er{def I_j-B} determine all the $B(i,j):=B(F^{(i)},F^{(j)})$ for $0\leq i\leq j\leq 2n$, and we can run the Gram-Schmid procedure to complete the orthogonal matrix.  
We omit the details, but just point out that $B(i,i+1)=\frac{1}{2}\frac{d}{dx} B(i,i)$ since $B$ is symmetric and this is how the leading terms for the odd levels are determined. 
Also note that the determinant of the orthogonal matrix can be $\pm 1$. 

The same can be done for the $y$-side to get an orthogonal matrix $\Psi(y)$ with its last columns to be 
$(G^{(j)})^T$ for $0\leq j\leq n$. 
Therefore $\Upsilon(x,y)=\Phi(x)\cdot \Psi(y)$ is an orthogonal matrix with its lower-right corner to be given by $F^{(i)}\cdot G^{(j)}$. (See \er{bn ex} for an example.) $\Upsilon(x,y)$ has determinant $\pm 1$. 

From \er{def dd} and by simple computation, we have 
$$
DD(\sqrt{v})=\frac{1}{2}\frac{DD(v)}{v}.
$$
Therefore again by Proposition \ref{general},
$$
DD(\sqrt{\tau_n})=\frac{1}{2}\frac{DD(\tau_{n})}{\tau_n}=\frac{1}{2}\frac{\tau_{n-1}\tau_{n+1}}{\tau_n}.
$$
For the orthogonal matrix $\Upsilon(x,y)$, 
$$\tau_{n+1}=M_{\{n+1,\cdots,2n,2n+1\}}\text{ and }\tau_{n}=M_{\{n+1,\cdots,2n\}}.$$
By Proposition \ref{swap}, $\tau_{n+1}=\pm \tau_{n}$ determined by the sign of $\det(\Upsilon(x,y))$. Therefore 
$$
DD(\sqrt{\tau_n})=\pm\frac{1}{2}\tau_{n-1}.
$$
\end{proof}

\begin{remark} There are direct proofs of the relations $\tau_{n+1}=\tau_{n-1}$ for the $C_n$ case and $\tau_{n+1}=\pm \tau_n$ for the $B_n$ case, without using the Gram-Schmidt process and Proposition \ref{swap}. Rather the classical relations (see, for example, \cite{FH}*{Appendix F}) between determinants
and the pairings $C(\cdot,\cdot)$ and $B(\cdot,\cdot)$ in \er{Omega} and \er{Theta} are used. The procedure is long, and we won't present the details here. 
\end{remark}

\begin{remark}\label{coeff's} The negative Toda field theory \er{toda} is related to the positive version 
$$
v_{i,xy}=\exp\Big({\sum_{j=1}^n a_{ij}v_j}\Big),\quad 1\leq i\leq n,
$$
through a simple linear transformation 
$$
v_i=u_i+\theta_i,
$$
where the $\theta_i$ 
satisfy the equation 
$$
A\cdot (\theta_1,\cdots,\theta_n)^T=((2m_1+1)i\pi,\cdots,(2m_n+1)i\pi)^T.
$$
Here $A$ is the Cartan matrix \er{Cartan}, and the $m_i$ are arbitrary integers coming from the multiple-valuedness of the exponential function. Since $A$ is invertible, such $\theta_i$ always exist, and actually there are infinitely many of them due to the $m_i$. 

This remark also applies to the $B_n$ Toda field theory where,  in view of \er{tau B_n} and \er{explain}, we have solved a variant
\begin{align*}
w_{i,xy}&=-\exp\Big({\sum_{j=1}^n a_{ij}w_j}\Big),\quad 1\leq i\leq n-1,\\
w_{n,xy}&=\mp\frac{1}{2}\exp\Big({\sum_{j=1}^n a_{nj}w_j}\Big)=\mp\frac{1}{2}\exp (-w_{n-1}+2w_n)
\end{align*}
To relate this to the \er{toda}, we only need to let $w_i=u_i+r_i$, where the $r_i$ satisfy, for example,  
$$
A\cdot (r_1,\cdots,r_n)^T=(0,\cdots,0,\pm\log 2)^T. 
$$
\end{remark}

\section{Related differential properties}\label{ite int}

Because of its possible independent interest and for clarity of exposition, we present in this section some relations among iterated integrals and their derivatives that we need later to prove Parts (2) of our theorems. 

For completeness, we prove Part (2) of Theorem \ref{thm-A_n} first. 
\begin{lemma}[\cite{L}]\label{prove W=1} \er{W=1} satisfy \er{w=1}. 
\end{lemma}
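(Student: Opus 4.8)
The goal is to show that the vector $F(x)=f_0\cdot(1,\ci(1),\ci(12),\dots,\ci(1\to n))$ from \er{W=1} has Wronskian $W(f_0,\dots,f_n)=1$. My plan is to compute this Wronskian by a sequence of elementary row/column operations that strip off the common factor $f_0$ and then peel off the iterated integrals one derivative at a time. First I would factor: since $f_i=f_0\,\ci(1\to i)$ with $\ci(1\to 0)=\ci(\varnothing)=1$, the Wronskian matrix has $(k,i)$-entry $\big(f_0\,\ci(1\to i)\big)^{(k)}$. Using the Leibniz rule and doing column operations (subtracting multiples of earlier columns), one reduces $W(f_0 h_0,\dots,f_0 h_n)$ to $f_0^{\,n+1}\,W(h_0,\dots,h_n)$ for $h_i=\ci(1\to i)$; this is the standard fact that the Wronskian scales by $f_0^{\,n+1}$ under multiplying every entry by $f_0$. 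Actually it is cleaner to invoke it directly: $W(f_0 h_0,\dots,f_0 h_n)=f_0^{\,n+1}W(h_0,\dots,h_n)$.

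Next I would compute $W(h_0,\dots,h_n)$ where $h_0=1$, $h_1=\ci(1)$, $h_2=\ci(12)$, \dots, $h_n=\ci(1\to n)$. The key structural fact, from \er{diff i}, is that $\frac{d}{dx}\ci(1\to i)=\phi_1\,\ci(2\to i)$, so differentiating shifts the leading index. I would perform row reduction on the Wronskian matrix from the bottom up (or expand along the first column, which is $(1,0,0,\dots,0)^T$ since $h_0=1$): this immediately gives $W(h_0,\dots,h_n)=W'(h_1',\dots,h_n')$ — the Wronskian of the derivatives of the remaining entries, one size smaller. Now $h_i'=\phi_1\,\ci(2\to i)$ for $i\ge 1$, so pulling out $\phi_1$ from each of the $n$ columns and the resulting derivative-chain structure lets one induct: $W(1,\ci(1),\dots,\ci(1\to n)) = \phi_1^{\,?}\cdot W(1,\ci(2),\dots,\ci(2\to n))$, and iterating peels off $\phi_1,\phi_2,\dots,\phi_n$ in turn. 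I expect the bookkeeping of exponents to be the only delicate point: tracking how many times each $\phi_j$ appears should reproduce exactly $f_0^{-(n+1)}=\prod_i \phi_i^{\,n+1-i}$, so that the $f_0^{\,n+1}$ from the first step cancels and the total Wronskian is $1$.

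Concretely, the cleanest organization is an induction on $n$. I would set up the inductive claim $W\big(1,\ci(a\to a),\ci(a\to a{+}1),\dots,\ci(a\to a{+}k)\big)=\phi_a\phi_{a+1}\cdots\phi_{a+k}\cdot W\big(1,\ci(a{+}1\to a{+}1),\dots\big)$-type recursion — more precisely, expand along the first column to drop the constant $h_0=1$, differentiate, factor $\phi_a$ out of all columns (contributing $\phi_a^{\,n}$ for the appropriate size), and recognize the remaining determinant as the next Wronskian in the chain. Carrying this through from $\phi_1$ down to $\phi_n$ yields $W(h_0,\dots,h_n)=\prod_{i=1}^{n}\phi_i^{\,n+1-i}$, hence $W(F)=f_0^{\,n+1}\prod_{i=1}^n\phi_i^{\,n+1-i}=1$ by the definition \er{def f_0} of $f_0$.

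The main obstacle will be making the exponent computation in the iterated factoring airtight: at each stage one expands along a column of the form $(1,0,\dots,0)^T$, differentiates the surviving $k\times k$ block, and pulls $\phi_a$ out of $k$ columns, so the exponent of $\phi_a$ accumulated is exactly the size of the matrix at that stage, namely $n+1-a$. Verifying this matches $\deg_{\phi_a}(f_0^{-(n+1)})$ term by term — and checking the edge cases $\ci(1\to 0)=1$ and the top step $h_n$ — is the bit that needs care, but it is purely combinatorial. Everything else is the standard behavior of Wronskians under scaling and the derivative formula \er{diff i}.
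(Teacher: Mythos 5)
Your proposal is correct, but it is organized differently from the paper's proof. The paper keeps $f_0$ inside the matrix, decomposes each entry as $f_j^{(i)}=\sum_k f_j^{(i)}(k)$ according to the largest index $k$ for which $\phi_k$ appears outside an integral, observes that $f_j^{(i)}(k)=f_k^{(i)}(k)\,\ci((k+1)\to j)$, and then performs one round of column operations $C_j\mapsto C_j-\ci((k+1)\to j)\,C_k$ for each $k$; this triangularizes the Wronskian in a single pass, with diagonal entries $f_0,\ f_0\phi_1,\ \dots,\ f_0\phi_1\cdots\phi_n$. You instead invoke two classical Wronskian identities: the scaling law $W(f_0h_0,\dots,f_0h_n)=f_0^{\,n+1}W(h_0,\dots,h_n)$ to extract $f_0^{\,n+1}$ at the outset, and the reduction $W(1,g_1,\dots,g_m)=W(g_1',\dots,g_m')$ (expansion along the constant column) combined with $\frac{d}{dx}\ci(a\to i)=\phi_a\,\ci((a+1)\to i)$ from \er{diff i} to peel off one $\phi_a$ at a time, giving the recursion $V_a=\phi_a^{\,n+1-a}V_{a+1}$ with $V_{n+1}=1$. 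Both routes land on $W(F)=f_0^{\,n+1}\prod_{i=1}^n\phi_i^{\,n+1-i}=1$. What your version buys is that it bypasses the bookkeeping of the summands $f_j^{(i)}(k)$, which is the only delicate point in the paper's argument, at the cost of nothing beyond two standard and easily verified Wronskian facts; your exponent count ($\phi_a$ is factored out of an $(n+1-a)\times(n+1-a)$ block) is right. One trivial notational caution: your $\ci(2\to 1)$ must be read as the empty sequence, with value $1$, consistent with the paper's convention $\ci(1\to 0)=1$.
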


\begin{proof} 
The Wronskian matrix 
is $\big(f_j^{(i)}\big)_{i,j=0}^{n}$, and we take the ranges of row and column indices to be from 0 to $n$. We first think of $f_0$ as just an arbitrary function of $x$, without remembering its definition \er{def f_0} in terms of the $\phi$'s. We write each derivative as 
$$
f_j^{(i)}=\sum_{k=0}^{\min(i,j)} f_j^{(i)}(k). 
$$
Here $f_j^{(i)}(k)$ is the summand of terms of $f_j^{(i)}$ where $\phi_k$ or its derivatives appear as factors outside the integral, but none of the $\phi_l$ for $l>k$ does. It is clear that $f_j^{(i)}(k)$ is non-zero only if $k\leq \min(i,j)$. 

For example, by \er{diff i}, we have 
\begin{align*}
f_3&=f_0\ci(123)\\
f_3'&=f_0'\ci(123)+f_0\phi_1\ci(23)\\
f_3''&=f_0''\ci(123)+2f_0'\phi_1\ci(23)+f_0\phi_1'\ci(23)+f_0\phi_1\phi_2\ci(3).
\end{align*}
Then
\begin{align*}
f_3^{(2)}(0)&=f_0''\ci(123)\\
f_3^{(2)}(1)&=2f_0'\phi_1\ci(23)+f_0\phi_1'\ci(23)\\
f_3^{(2)}(2)&=f_0\phi_1\phi_2\ci(3).
\end{align*}

Then inspections show that 
\begin{align*}
f_j^{(i)}(k)&=f_k^{(i)}(k)\cdot \ci((k+1)\to j),&& k\leq \min(i,j),\\
f_k^{(k)}(k)&=f_0\phi_1\cdots\phi_k, &&k\geq 0.
\end{align*}

Therefore the following successive column operations, replacing the old columns by the new ones, 
\begin{align*}
C_j&\mapsto C_j-\ci(1\to j)C_0, && j\geq 1\\
C_j&\mapsto C_j-\ci(2\to j)C_1, && j\geq 2\\
& \cdots &&\cdots\\
C_j&\mapsto C_j-\ci((n-1)\to j)C_{n-2},&&j\geq n-1\\
C_n&\mapsto C_n-\ci(n)C_{n-1} &&
\end{align*}
transform the Wronskian into a lower triangular matrix with diagonal entries
$$f_0,f_0\phi_1,f_0\phi_1\phi_2,\cdots,f_0\phi_1\phi_2\cdots\phi_n.$$
Thus the determinant $W(F)$ is 1, by the definition of $f_0$ in \er{def f_0}. 
\end{proof} 

To obtain results for the $B_n$ and $C_n$ type Lie algebras, we need more symmetry properties of the iterated integrals. 

\begin{lemma}\label{prod rule} We have 
\begin{align*}
\ci(a_1\cdots a_l)\ci(b_1\cdots b_m)&=\int \phi_{a_1}\,\ci(a_2\cdots a_l)\ci(b_1\cdots b_m)\\
&\quad +\int \phi_{b_1}\,\ci(a_1\cdots a_l)\ci(b_2\cdots b_m).
\end{align*}
\end{lemma}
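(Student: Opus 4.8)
The plan is to reduce this to the ordinary Leibniz rule for derivatives together with the fundamental theorem of calculus. Write $P(x)=\ci(a_1\cdots a_l)$ and $Q(x)=\ci(b_1\cdots b_m)$, where we take $l,m\geq 1$ (the sequences appearing in the statement being nonempty, so that $\phi_{a_1}$ and $\phi_{b_1}$ make sense; the tails $\ci(a_2\cdots a_l)$ and $\ci(b_2\cdots b_m)$ are allowed to be $\ci(\varnothing)=1$). By \er{diff i} we have $P'=\phi_{a_1}\ci(a_2\cdots a_l)$ and $Q'=\phi_{b_1}\ci(b_2\cdots b_m)$, and moreover $P(0)=Q(0)=0$, since each of $P,Q$ is an iterated integral beginning with $\int_0^x$.

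First I would apply the Leibniz rule $(PQ)'=P'Q+PQ'$ and integrate from $0$ to $x$. Since $P(0)Q(0)=0$, no constant of integration survives and the fundamental theorem of calculus gives
\[
P(x)Q(x)=\int_0^x\big(P'(x_1)Q(x_1)+P(x_1)Q'(x_1)\big)\,dx_1.
\]
Then I would substitute the expressions for $P'$ and $Q'$ from \er{diff i} and rewrite the two resulting integrals using the shorthand $\int f=\int_0^x f(x_1)\,dx_1$; this yields exactly the claimed identity
\[
\ci(a_1\cdots a_l)\ci(b_1\cdots b_m)=\int \phi_{a_1}\,\ci(a_2\cdots a_l)\ci(b_1\cdots b_m)+\int \phi_{b_1}\,\ci(a_1\cdots a_l)\ci(b_2\cdots b_m).
\]

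There is no serious obstacle here: the only points needing a word of care are that \emph{both} boundary contributions at $x_1=0$ vanish (which is why the identity has no additive constant) and that the conventions of \er{short} and \er{diff i} are being applied with the repeated-integrand convention in force, so that the tails $\ci(a_2\cdots a_l)$ and $\ci(b_2\cdots b_m)$ are read literally, repetitions included. I would remark that iterating this identity — peeling off one integral at a time from whichever of the two factors one chooses — is precisely the mechanism that produces the shuffle-type expansions of products of iterated integrals, which are what we shall use in the proofs of Parts (2) of the theorems.
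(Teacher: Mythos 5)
Your proof is correct and is exactly the paper's argument: the paper's own proof reads, in its entirety, that the identity ``follows from the product rule and \er{diff i}, since our integrals are fixed with lower limit 0,'' which is precisely the Leibniz-plus-fundamental-theorem computation you spell out. Your version just makes explicit the vanishing of the boundary term at $0$ and the substitution of \er{diff i}, which the paper leaves to the reader.
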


\begin{proof} This follows from the product rule and \er{diff i}, since our integrals are fixed with lower limit 0. 
\end{proof}

\begin{proposition}\label{concrete} Let $n\geq 1$ be an integer, and let $\phi_1(x),\cdots,\phi_n(x)$ be $n$ functions of $x$. Define $\ci(a_1\cdots a_k)$ as in \er{short}. Then we have 
\begin{equation}\label{source}
\begin{split}
\sum_{i=0}^n (-1)^i \,\ci(1\to i)\,\ci(n\to i+1)=0.
\end{split}
\end{equation}
\end{proposition}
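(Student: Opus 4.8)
The plan is to prove \eqref{source} by induction on $n$, using the differentiation rule \eqref{diff i} together with Lemma \ref{prod rule}. Write $S_n(x)=\sum_{i=0}^n (-1)^i\,\ci(1\to i)\,\ci(n\to i+1)$, where $\ci(1\to 0)=\ci(n\to n+1)=1$ by the conventions in the paper. Since $S_n(0)=0$ (every nonempty iterated integral vanishes at $x=0$, and the $i=0$ and $i=n$ terms are $\ci(n\to 1)$ and $(-1)^n\ci(1\to n)$ respectively, both nonempty for $n\geq 1$, while there is no way for a constant term to survive — one checks the would-be constant terms cancel), it suffices to show $\frac{d}{dx}S_n=0$.

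First I would differentiate term by term. By \eqref{diff i}, for $1\leq i\leq n-1$ the factor $\ci(1\to i)$ has derivative $\phi_1\,\ci(2\to i)$ and $\ci(n\to i+1)$ has derivative $\phi_n\,\ci(n-1\to i+1)$; the extreme terms $i=0$ and $i=n$ contribute only one summand each since the other factor is the constant $1$. Collecting, $\frac{d}{dx}S_n$ splits into a "$\phi_1$-part" $\phi_1\sum_{i=1}^n (-1)^i\,\ci(2\to i)\,\ci(n\to i+1)$ and a "$\phi_n$-part" $\phi_n\sum_{i=0}^{n-1}(-1)^i\,\ci(1\to i)\,\ci(n-1\to i+1)$. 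The key observation is that each of these inner sums is, after reindexing, an instance of the identity \eqref{source} for $n-1$ but relative to a shifted tuple of functions: the $\phi_1$-part uses the functions $\phi_2,\dots,\phi_n$ (reindex $j=i-1$, so $\ci(2\to i)=\ci(2\to j+1)$ plays the role of "$\ci(1\to j)$" and $\ci(n\to i+1)=\ci(n\to j+2)$ plays the role of "$\ci((n-1)\to j+1)$" after the shift $k\mapsto k-1$), and similarly the $\phi_n$-part uses $\phi_1,\dots,\phi_{n-1}$ and is literally $S_{n-1}$ for those functions. Hence by the inductive hypothesis both inner sums vanish, so $\frac{d}{dx}S_n=0$, and with $S_n(0)=0$ we conclude $S_n\equiv 0$. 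The base case $n=1$ reads $\ci(\varnothing)\,\ci(1\to 2)-\ci(1)\,\ci(\varnothing)=\ci(1)-\ci(1)=0$ (using $\ci(1\to 2)=\ci(1)$), which holds.

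The main obstacle I anticipate is purely bookkeeping: getting the index shifts and the sign conventions right so that the two inner sums are correctly identified with lower instances of \eqref{source}, and verifying the boundary terms $i=0, n$ are handled consistently with the conventions $\ci(1\to 0)=1$, $\ci(a\to b)=\ci(a)$ when the arrow degenerates, and $\ci(\varnothing)=1$. One subtlety worth checking carefully is that in the $\phi_1$-part the arrow $\ci(2\to i)$ should be read as $1$ when $i=1$ (degenerate range $2\to 1$), matching the convention that there is no $\phi_0$; this is exactly what makes the $i=1$ term of the $\phi_1$-sum equal to $-\ci(n\to 2)$, consistent with the $n-1$ version of \eqref{source} having a leading term $\ci((n-1)\to 1)$ under the relabeling. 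Once the conventions are pinned down, the induction closes immediately and no genuine computation is required.
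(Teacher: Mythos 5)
Your proof is correct and is essentially the paper's own argument: the same induction on $n$, with the product rule splitting each term into a $\phi_1$-part and a $\phi_n$-part whose sums are recognized as the $(n-1)$-instance of \er{source} for the shifted tuples $\phi_2,\dots,\phi_n$ and $\phi_1,\dots,\phi_{n-1}$. The only cosmetic difference is that you phrase it as $\frac{d}{dx}S_n=0$ together with $S_n(0)=0$, while the paper writes the same computation in integrated form via Lemma \ref{prod rule} (which is itself proved by exactly this derivative-plus-vanishing-at-$0$ argument).
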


\begin{proof} We denote the left hand side of \er{source} by $A(n)$, and call it an alternating sum of $n$ integrands $\phi_1,\cdots,\phi_n$. 
We prove that $A(n)=0$ by induction. The identity is trivial when $1$. 

Let's assume that $A(n-1)=0$ for $n-1$ arbitrary functions $\phi_i(x)$. Then by Lemma \ref{prod rule} and \er{diff i}, 
\begin{align*}
A(n)&=\ci(n\to 1)-\ci(n\to 2)\ci(1)+\cdots+(-1)^{n-1}\ci(n)\ci(1\to (n-1))+(-1)^n\ci(1\to n)\\
&=\ci(n\to 1)-\bigg(\int \phi_n \ci((n-1)\to 2)\ci(1)+\int \phi_1 \ci(n\to 2)\bigg)+\cdots\\
&\quad +(-1)^{n-1}\bigg(\int \phi_n \ci(1\to (n-1))+\int \phi_1 \ci(n)\ci(2\to (n-1))\bigg)+(-1)^n\ci(1\to n)\\
&=\int \phi_n \Big[\ci((n-1)\to 1)-\ci((n-1)\to 2)\ci(1)+\cdots+(-1)^{n-1}\ci(1\to (n-1))\Big]\\
&\quad -\int \phi_1 \Big[\ci(n\to 2)-\cdots+(-1)^{n-2}\ci(n)\ci(2\to (n-1))+(-1)^{n-1}\ci(2\to n)\Big]\\
&=\int \phi_n A(n-1)-\int \phi_1 \tilde A(n-1)\\
&=0,
\end{align*}
where $\tilde A(n-1)$ is the $A(n-1)$ with the $n-1$ integrands being $\phi_2,\cdots,\phi_n$. 
\end{proof}

We will later use the convention that the LHS of \er{source} is 1 when $n=0$. 

\begin{remark} Lemma \ref{prod rule} and Proposition \ref{concrete} are related to the shuffle relations for iterated integrals in \cite{Chen}. 
\end{remark}

Proposition \ref{concrete} can be rephrased and strengthened as follows. 
\begin{proposition}\label{strong} Let $n\geq 2$ be an integer. Let  
\begin{equation}\label{def j}
J=J_{n+1}=\begin{pmatrix}
& & & & 1\\
& & & -1 & \\
& & 1 & & \\
& \iddots & & & \\
(-1)^{n} & & & &\\
\end{pmatrix}
\end{equation}
be a matrix of rank $n+1$ with alternating $\pm 1$'s on the skew diagonal. Note that $J$ is symmetric if $n$ is even, and skew-symmetric if $n$ is odd. Let $\phi_1(x),\cdots,\phi_{n}(x)$ be $n$ functions of $x$. Let 
\begin{equation}\label{def F}
\tilde F(x)=(1,\ci(1),\ci(12),\cdots,\ci(1\to n))
\end{equation}
be a vector of $n+1$ functions. Also for $0\leq i\leq n$, define a delayed version of $\t F(x)$ by 
\begin{equation}\label{dif}
\delta_i\t F(x)=(\underbrace{0,\cdots,0}_{i},1,\ci((i+1)),\cdots,\ci((i+1)\to n)).
\end{equation}
Define the ``swap" function 
\begin{equation}\label{def swap}
{s}(i)=
n+1-i,\quad 1\leq i\leq n
\end{equation}
on the indices $\{1,\cdots,n\}$ of the $\phi$'s. 
This defines the ``swap" on iterated integrals by $s\ci(a_1\cdots a_m)=\ci(s(a_1)\cdots s(a_m))$, and also on a vector of iterated integrals. 

Then we have
\begin{align}
(\delta_i\tilde F)J(s\delta_j\tilde F)^T&=0, &&\text{if }i+j\neq n\label{neq n}\\
(\delta_i\tilde F)J(s\delta_j\tilde F)^T&=(-1)^i, &&\text{if }i+j= n.\label{equ n}
\end{align}
\end{proposition}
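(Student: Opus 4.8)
The plan is to reduce \eqref{neq n} and \eqref{equ n} to the scalar identity \eqref{source} of Proposition \ref{concrete}, by expanding the bilinear form $(\delta_i\t F)J(s\delta_j\t F)^T$ explicitly against the anti-diagonal matrix $J$. First I would write out the pairing: if $\delta_i\t F = (u_0,\dots,u_n)$ with $u_k = \ci((i+1)\to k)$ for $k\geq i$ and $u_k=0$ for $k<i$, and similarly $s\delta_j\t F = (v_0,\dots,v_n)$ with $v_k = s\,\ci((j+1)\to k) = \ci((n-j)\to(n+1-k))$ for $k\geq j$ and $v_k=0$ for $k<j$, then since $J$ has $(-1)^k$ in the $(k, n-k)$ entry (reading indices $0$ to $n$), the pairing collapses to the single sum
\[
(\delta_i\t F)J(s\delta_j\t F)^T = \sum_{k=0}^{n} (-1)^k\, u_k\, v_{n-k}.
\]
The terms are nonzero only when $k\geq i$ and $n-k\geq j$, i.e. $i\leq k\leq n-j$; in particular the whole sum vanishes automatically if $i+j>n$, giving half of \eqref{neq n}.

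For the remaining range $i+j\leq n$, I would substitute the integral expressions and reindex. Using $s$ on the second factor, $v_{n-k} = \ci((n-j)\to(k+1))$, so the sum becomes, up to the overall bookkeeping of signs,
\[
\sum_{k=i}^{n-j} (-1)^k\, \ci((i+1)\to k)\,\ci((n-j)\to(k+1)).
\]
After shifting the summation index by $i$ and relabeling $\ell = k-i$, running from $0$ to $n-i-j$, this is exactly (after extracting $(-1)^i$) an alternating sum of the form $\sum_{\ell=0}^{m}(-1)^\ell\,\ci(1\to\ell)\,\ci(m\to\ell+1)$ with $m = n-i-j$, but built from the shifted family of $m$ functions $\phi_{i+1},\dots,\phi_{n-j}$ (whose count is $n-j-i = m$, matching). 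By Proposition \ref{concrete} applied to this shifted family — together with the stated convention that the left side of \eqref{source} equals $1$ when $m=0$ — this alternating sum equals $0$ when $m = n-i-j > 0$, i.e. when $i+j\neq n$, proving \eqref{neq n} in the remaining range; and equals $1$ when $m=0$, i.e. $i+j=n$, so that the pairing is $(-1)^i$, proving \eqref{equ n}.

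The main obstacle I anticipate is purely bookkeeping: getting the index conventions for $J$ (rows/columns $0$ to $n$, anti-diagonal entry $(-1)^{?}$ — one must check whether it is $(-1)^k$ or $(-1)^{n-k}$ against the display \eqref{def j}, and the sign discrepancy only rescales the final $(-1)^i$ to possibly $(-1)^{n-j}=(-1)^i$ on the locus $i+j=n$, so it is harmless but must be tracked), and matching the ``swap'' $s(a) = n+1-a$ on the $\phi$-indices with the shift of the integrand family to $\phi_{i+1},\dots,\phi_{n-j}$. One should verify that $s$ applied inside $\ci((j+1)\to k)$ genuinely produces the decreasing run $\ci((n-j)\to (n+1-k))$ and that, after the global change of variables, this is the \emph{increasing} run in the shifted indices required by the format of \eqref{source}; this is where the symmetry of $J$ (symmetric for $n$ even, skew for $n$ odd) and the reversal in $s$ conspire to land on Proposition \ref{concrete} on the nose. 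Once these identifications are pinned down, no further analytic input is needed beyond Lemma \ref{prod rule} and \eqref{diff i}, which are already absorbed into Proposition \ref{concrete}.
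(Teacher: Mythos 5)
Your proposal is correct and follows essentially the same route as the paper: expand $XJY^T=\sum_k(-1)^k x_k y_{n-k}$, observe the support constraint $i\le k\le n-j$ (which kills the case $i+j>n$), and identify the surviving alternating sum, after extracting $(-1)^i$, with the quantity $A(n-i-j)$ of Proposition \ref{concrete} built from the shifted family $\phi_{i+1},\dots,\phi_{n-j}$. The only cosmetic difference is that you treat $i+j=n$ via the convention $A(0)=1$ while the paper reads it off directly as the signed product of the two $1$'s; these are the same computation.
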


\begin{proof} It is clear that for vectors $X=(x_0,x_1,\cdots,x_n)$ and $Y=(y_0,y_1,\cdots,y_n)$, we have
$$
XJY^T=x_0y_n-x_1y_{n-1}+\cdots+(-1)^n x_ny_0.
$$
By definition, we have
$$
s\delta_j\tilde F=(\underbrace{0,\cdots,0}_{j},1,\ci((n-j)),\cdots,\ci((n-j)\to 1)).
$$

When $i+j>n$, $(\delta_i\tilde F)J(s\delta_j\tilde F)^T$ is obviously zero, since there are too many zeros. When $i+j=n$, we have $(\delta_i\tilde F)J(s\delta_j\tilde F)^T=(-1)^i$ as the signed product of the two ones. 

When $i+j<n$, we have
\begin{align*}
(\delta_i\tilde F)J(s\delta_j\tilde F)^T&=(-1)^i \ci((n-j)\to (i+1))+(-1)^{i+1}\ci(i+1)\ci((n-j)\to i)+\cdots\\
&\quad +(-1)^{n-j}\ci((i+1)\to (n-j))\\
&=(-1)^{i}A(n-i-j),
\end{align*}
where $A(n-i-j)$ is the $A$ in the proof of Proposition \ref{concrete} with $n-i-j$ integrands $\phi_{i+1},\cdots,\phi_{n-j}$. This is zero by Proposition \ref{concrete}. 
\end{proof} 


Now we use Proposition \ref{strong} to derive some differential relations. 

\begin{proposition}\label{general dr} Continue with the notation in Proposition \ref{strong}. 
Also let 
\begin{equation}\label{missed sqrt}
f_0(x)=\big(\prod_{i=1}^{n}\phi_i(x)\big)^{-\frac{1}{2}}.
\end{equation}
Define
\begin{equation}\label{old F}
F(x)=f_0(x)\tilde F(x)=f_0(x)(1,\ci(1),\cdots,\ci(1\to n)). 
\end{equation}
Recall that $F^{(i)}$ is the $i$th derivative of $F$. Then we have 
\begin{align}
F^{(i)}J(sF^{(j)})^T&=0, && \text{if }  i+j< n,\label{we get all}\\
F^{(i)}J(sF^{(j)})^T&=(-1)^i, && \text{if }  i+j= n.\label{combined}
\end{align}
\end{proposition}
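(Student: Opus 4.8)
The strategy is to reduce the statement about $F = f_0\tilde F$ to the already-established identities \eqref{neq n} and \eqref{equ n} for $\tilde F$, by tracking how derivatives interact with the scalar prefactor $f_0$. The key observation is that for any scalar function $h(x)$ and any vector-valued functions $U, V$, the Leibniz rule gives
\[
(hU)^{(i)} = \sum_{a=0}^{i} \binom{i}{a} h^{(a)} U^{(i-a)},
\]
so that
\[
F^{(i)} J (s F^{(j)})^T = \sum_{a=0}^{i}\sum_{b=0}^{j} \binom{i}{a}\binom{j}{b} f_0^{(a)} f_0^{(b)}\, \bigl(\text{terms built from } \tilde F^{(i-a)} J (s\tilde F^{(j-b)})^T\bigr),
\]
but this is \emph{not} quite right because $s(f_0 \tilde F) = f_0$ need not be preserved --- one must be careful that the swap $s$ acts only on the $\phi$-indices inside the iterated integrals, and $f_0 = (\prod \phi_i)^{-1/2}$ is symmetric under $s$, so $s F = f_0 \cdot s\tilde F$. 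That symmetry of $f_0$ is exactly why the choice \eqref{missed sqrt} is forced, and it is the first thing I would check.

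\textbf{Reduction via derivatives of $\tilde F$.} Granting $sF = f_0\, s\tilde F$, I would expand both $F^{(i)}$ and $(sF^{(j)})^T = (f_0\, s\tilde F)^{(j),T}$ by Leibniz and collect:
\[
F^{(i)} J (sF^{(j)})^T = \sum_{a=0}^{i}\sum_{b=0}^{j}\binom{i}{a}\binom{j}{b} f_0^{(a)} f_0^{(b)}\; \tilde F^{(i-a)} J (s\tilde F^{(j-b)})^T.
\]
Now apply Proposition \ref{strong}: the inner bilinear pairing $\tilde F^{(p)} J (s\tilde F^{(q)})^T$ vanishes unless $p + q \ge n$, and when $i + j < n$ we have $p + q = (i-a) + (j-b) \le i + j < n$ for every term in the sum, so every summand is zero. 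This proves \eqref{we get all}. When $i + j = n$, the only surviving terms are those with $(i-a)+(j-b) \ge n$, i.e. $a + b \le 0$, forcing $a = b = 0$; that single term contributes $f_0^2 \cdot \tilde F^{(i)} J (s\tilde F^{(j)})^T = f_0^2 \cdot (-1)^i$ --- wait, this gives $f_0^2 (-1)^i$, not $(-1)^i$. So the identity as stated in \eqref{combined} must implicitly be using a different normalization, or Proposition \ref{strong} must be applied not to $\tilde F$ itself but to $\delta_k \tilde F$; indeed $\tilde F^{(i-a)}$ for $i - a < i$ is not simply related to $\tilde F$ but involves the delayed versions $\delta_k \tilde F$, and the prefactor bookkeeping will produce the $\phi$-products that exactly cancel the $f_0^{-2} = \prod \phi_i$. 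Sorting out this cancellation --- i.e. showing that when one writes $F^{(i)}$ in terms of $f_0^{(a)}$'s and the $\delta_k \tilde F$'s, the quadratic-in-$f_0$ coefficient combined with the emerging $\phi$-factors collapses to exactly $1$ --- is the crux.

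\textbf{Main obstacle.} The genuine difficulty is the precise combinatorial identity relating derivatives of $\tilde F$ to the delayed vectors $\delta_k\tilde F$ together with explicit products of the $\phi_i$ and their derivatives. Concretely, by \eqref{diff i} one has $\tfrac{d}{dx}\ci((k+1)\to j) = \phi_{k+1}\ci((k+2)\to j)$, so differentiating $\tilde F$ repeatedly shifts the "starting index" and pulls out $\phi$-factors; the resulting expansion of $F^{(i)}$ is a sum over which $\phi$'s have been "used up," weighted by derivatives of $f_0$ and of the $\phi$'s. I expect the cleanest route is to mimic the column-reduction argument in the proof of Lemma \ref{prove W=1}: decompose $F^{(i)} = \sum_k F^{(i)}(k)$ where $F^{(i)}(k)$ collects terms in which $\phi_k$ (but no higher $\phi$) has been extracted, observe $F^{(i)}(k) = (\text{scalar involving } f_0, \phi_1,\dots,\phi_k) \cdot \delta_k\tilde F$, and then plug into $F^{(i)} J (sF^{(j)})^T$. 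Because $s$ sends $\phi_k \mapsto \phi_{n+1-k}$, the pairing of the $k$-piece on the left with the $(n-k)$-piece on the right is the only one that can survive by \eqref{neq n}/\eqref{equ n}, and the product of the two scalar factors there will be $f_0^2 \phi_1\cdots\phi_k \cdot \phi_{n}\cdots\phi_{k+1} = f_0^2 \prod_{i=1}^n \phi_i = 1$ by \eqref{missed sqrt}. Verifying that the signs and the delayed-index matching work out --- in particular that $i + j = n$ is precisely the condition under which a nonzero contribution appears, with value $(-1)^i$ --- is the main bookkeeping task, but it is routine once the $F^{(i)}(k) = (\text{scalar})\cdot \delta_k\tilde F$ decomposition is in hand.
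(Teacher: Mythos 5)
Your final argument --- decomposing $F^{(i)}$ as a combination of the delayed vectors $\delta_l\tilde F$ with leading coefficient $f_0\prod_{l=1}^i\phi_l$, applying Proposition \ref{strong} so that for $i+j<n$ every pairing vanishes and for $i+j=n$ only the $(l,k)=(i,j)$ term survives, and then cancelling $f_0^2\prod_{l=1}^n\phi_l=1$ --- is exactly the paper's proof. (Your first reduction, which pairs $\tilde F^{(p)}J(s\tilde F^{(q)})^T$ directly, misapplies Proposition \ref{strong}, which is a statement about the $\delta$-vectors and not about derivatives of $\tilde F$, but you detect and repair this yourself, landing on the paper's argument.)
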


\begin{proof} By \er{dif} and \er{diff i}, we have 
$$
\frac{d}{dx}\big(\delta_i\t F(x)\big)=\phi_{i+1}\big(\delta_{i+1}\t F(x)\big),\quad i\geq 0. 
$$
Therefore we see that, through a quick induction, the $i$th derivative
$$
F^{(i)}=\sum_{l=0}^{i-1} c_{i,l} (\delta_l\t F)+f_0\Big(\prod_{l=1}^i \phi_l\Big)\delta_i\t F,
$$
where the $c_{i,l}$ are some functions of $x$ in terms of $f_0,\phi_1,\cdots,\phi_{l}$ and their derivatives. 

Note that
$$
sF^{(j)}=(sF)^{(j)}=\sum_{k=0}^{j-1} s(c_{j,k}) (s\delta_k\t F)+f_0\Big(\prod_{k=n-j+1}^n \phi_{k}\Big)s\delta_j\t F.
$$

Therefore $F^{(i)}J(sF^{(j)})^T$ is a linear combination of $(\delta_l \t F)J(s\delta_k \t F)^T$ for $l\leq i$ and $k\leq j$ with the coefficients as some functions of $x$. When $i+j<n$, then all such $k+l<n$, and from \er{neq n} we have \er{we get all}. 

When $i+j=n$, from \er{neq n} and \er{equ n} we see the only nontrivial terms is
\begin{equation*}
\Big(f_0\big(\prod_{l=1}^i \phi_l)\Big) \Big(f_0\big(\prod_{k=n-j+1}^n \phi_{k})\Big) (\delta_i\t F)J(s\delta_j\t F)
=f_0^2\Big(\prod_{l=1}^n \phi_l\Big) (-1)^i=(-1)^i,
\end{equation*}
by the definition of $f_0$ in \er{missed sqrt}. 
\end{proof}

There are  corresponding versions of Propositions \ref{strong} and \ref{general dr} for the $D_n$ case, which we present here for completeness.
\begin{proposition}\label{alg for D} Let $n\geq 3$ be an integer. Let  
$$
K=\begin{pmatrix}
 & & & & & & & 1\\
  & & & && &-1 & \\
 & & & & &\iddots & & \\
 & & & &(-1)^{n-1} &&  & \\
 & & & (-1)^{n+1} & & & & \\
 & &\iddots & & &  & &\\
  & -1& & & & & & \\
1 & & & &  & & &
\end{pmatrix}
$$
be an symmetric matrix of rank $2n$. Let $\phi_1(x),\cdots,\phi_{2n-2}(x)$ be $2n-2$ functions of $x$. Let 
\begin{multline}
\t F(x)=(1,\ci(1),\ci(12),\cdots,\ci(1\to (n-2)),\ci(1\to (n-2)(n-1))\\
\ci(1\to (n-2) n),\ci(1\to (n-2)(n-1)n)+\ci(1\to (n-2)n(n-1)),\\
\ci(1\to (n-2)(n-1)n(n+1))+\ci(1\to (n-2)n(n-1)(n+1),\cdots,\\
\ci(1\to (2n-2))+\ci(1\to (n-2)n(n-1)(n+1)\to (2n-2))
\end{multline}
be a vector of $2n$ functions. Note that $\t F(x)$ goes like before for the first $(n-1)$ terms, then it branches using $\phi_{n-1}$ and $\phi_n$, and starting from the $(n+2)$nd term it is always symmetrized between $\phi_{n-1}$ and $\phi_n$. 
Define the ``swap" function 
$$
{s}(i)=\begin{cases}
2n-1-i & i\leq n-2\\
n-1 & i=n-1\\
n & i=n\\
2n-1-i & i\geq n+1
\end{cases}
$$
and let it act on iterated integrals and vectors of them as before. 
Then we have
$$
\t FK(s\t F)^T=0.
$$

Moreover, let
\begin{multline}\label{did}
\delta_i \t F(x)=(\underbrace{0,\cdots,0}_{i},1,\ci(i+1),\cdots,\ci((i+1)\cdots (n-2)(n-1)),\ci((i+1)\cdots (n-2)n),\\
\text{symmetrized terms})\qquad  i\leq n-2\\
\delta_{n-1}\t F(x)=(\underbrace{0,\cdots,0}_{n-1},1,0,\ci(n),\ci(n(n+1)),\cdots,\ci(n\cdots(2n-2)) \\
\delta_{n}\t F(x)=(\underbrace{0,\cdots,0}_{n},1,\ci(n-1),\ci((n-1)(n+1)),\cdots,\ci((n-1)(n+1)\cdots(2n-2)) \\
\delta_{i}\t F(x)=(\underbrace{0,\cdots,0}_{i},1,\ci(i+1),\cdots,\ci((i+1)\cdots(2n-2)) \hfill i\geq n+1.
\end{multline}
Then 
\begin{align*}
(\delta_i\tilde F)K(s\delta_j\tilde F)^T&=0, &&\text{if }i+j\neq 2n-1\\
(\delta_{i}\tilde F)K(s\delta_j\tilde F)^T&=(-1)^{i}, &&\text{if }i+j= 2n-1,\text{ and }i<j
\end{align*}
\end{proposition}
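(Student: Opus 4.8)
The plan is to adapt the two-step strategy used for Proposition \ref{strong} and Proposition \ref{general dr}: first prove the purely algebraic (integral) identity $\t F K (s\t F)^T = 0$ together with its "delayed" refinements, and then feed this into a differentiation argument exactly as in Proposition \ref{general dr}. For the first part I would set up the notation $XKY^T$ explicitly: writing $X=(x_0,\dots,x_{2n-1})$ and $Y=(y_0,\dots,y_{2n-1})$, one has $XKY^T=\sum_{l=0}^{2n-1}(-1)^{\epsilon(l)}x_l\,y_{2n-1-l}$ where the sign $\epsilon(l)$ is read off the skew diagonal of $K$ (with the middle sign flip between positions $n-1$ and $n$ that makes $K$ symmetric). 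The key observation, just as in Proposition \ref{strong}, is that by applying the swap $s$ to $\delta_j\t F$ one turns it into a vector whose nonzero entries are $\ci$'s running from high indices down to $1$, so that the pairing $(\delta_i\t F)K(s\delta_j\t F)^T$ becomes, up to an overall sign $(-1)^i$, an alternating sum of products of two iterated integrals whose combined integrand set is $\{\phi_{i+1},\dots,\phi_{2n-2-j}\}$ (with the branching pair $\phi_{n-1},\phi_n$ occurring symmetrically). The cases $i+j>2n-1$ are trivial (too many zeros), the case $i+j=2n-1$ gives the single signed product of the two leading $1$'s — which is where the constraint $i<j$ enters, to pin down which of the two branches supplies the $1$ and hence the sign $(-1)^i$ — and the case $i+j<2n-1$ must be shown to vanish.

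For the vanishing in the range $i+j<2n-1$, I would run induction on the number $N:=2n-2-i-j$ of integrands involved, mirroring the proof of Proposition \ref{concrete} but now carrying the extra symmetrization between $\phi_{n-1}$ and $\phi_n$ through the recursion. Using Lemma \ref{prod rule} to split each product $\ci(\cdots)\ci(\cdots)$ into two integrals $\int\phi_{?}(\cdots)$, the alternating sum telescopes: the terms pulling out $\int\phi_{i+1}$ reassemble into the analogous alternating sum on the integrand set with $\phi_{i+1}$ removed (a "$\tilde A$" as in Proposition \ref{concrete}), and the terms pulling out the topmost integrand reassemble into the alternating sum with that integrand removed. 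When the induction reaches the two-integrand layer involving the symmetrized pair, the symmetrization $\ci(\cdots(n-1)n\cdots)+\ci(\cdots n(n-1)\cdots)$ is exactly the combination needed for the two surviving boundary terms to cancel, rather than merely canceling against a neighbor. This bookkeeping — tracking signs and the branch structure through the recursion — is the part I expect to be the main obstacle, since the $D_n$ picture is genuinely less symmetric than the $B_n$/$C_n$ one and the indexing in \er{did} has three qualitatively different regimes ($i\le n-2$, $i\in\{n-1,n\}$, $i\ge n+1$) that each have to be checked against the recursion.

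Once the delayed identities $(\delta_i\t F)K(s\delta_j\t F)^T=0$ for $i+j\ne 2n-1$ and $=(-1)^i$ for $i+j=2n-1$, $i<j$, are in hand, the statement $\t F K(s\t F)^T=0$ is just the case $i=j=0$ (note $0+0\ne 2n-1$ since $n\ge 3$). This completes the proposition as stated. (The accompanying differential version for $F=f_0\t F$ would then follow verbatim from the argument of Proposition \ref{general dr}: differentiate using $\frac{d}{dx}(\delta_i\t F)=\phi_{i+1}(\delta_{i+1}\t F)$ on the non-branching stretches and the obvious analogue on the branching ones, expand $F^{(i)}$ as a triangular combination $\sum_{l<i}c_{i,l}(\delta_l\t F)+f_0(\prod_{l\le i}\phi_l)\delta_i\t F$, and observe that the only term of $F^{(i)}K(sF^{(j)})^T$ that can be nonzero when $i+j=2n-1$ is the product of the two leading terms, which evaluates to $f_0^2\prod_{l=1}^{2n-2}\phi_l\cdot(-1)^i=(-1)^i$ by the choice of $f_0$.) The only genuinely new work beyond Propositions \ref{concrete}–\ref{general dr} is therefore the combinatorial induction handling the symmetrized branch, and verifying that the $i<j$ hypothesis is precisely what is needed to fix the sign on the skew diagonal of $K$.
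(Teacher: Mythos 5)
Your overall architecture (reduce to ``delayed'' identities for the $\delta_i\t F$, handle $i+j>2n-1$ and $i+j=2n-1$ by inspection of the two leading $1$'s, then push the hard case $i+j<2n-1$ through an induction modeled on Proposition \ref{concrete}) is workable, but the paper takes a shorter route precisely where you expect the main obstacle. The key structural observation you miss is that no new induction is needed: for $i+j<2n-1$ the symmetrized pairing $(\delta_i\t F)K(s\delta_j\t F)^T$ splits \emph{exactly} into one or two copies of the alternating sum $A(\cdot)$ of Proposition \ref{concrete}, each taken with respect to a fixed relabeled total order of the integrands --- one copy for the order $\cdots(n-1)\,n\cdots$ and one for $\cdots n\,(n-1)\cdots$ (e.g.\ $\t FK(s\t F)^T=A_1(2n-2)+A_2(2n-2)$ for the sequences $(1\to(2n-2))$ and $(1\to(n-2)\,n\,(n-1)\,(n+1)\to(2n-2))$). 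Since Proposition \ref{concrete} holds for \emph{arbitrary} functions, each copy vanishes on its own; the two positions $n-1$ and $n$ of the skew diagonal of $K$ (which carry the same sign $(-1)^{n-1}$) supply the two ``unsymmetrized'' middle terms of $A_1$ and $A_2$ respectively, and every other position contributes a symmetrized pair that distributes across $A_1$ and $A_2$. In particular your guiding picture --- that the symmetrization between $\phi_{n-1}$ and $\phi_n$ is needed so that the two branches cancel \emph{against each other} at the bottom of the recursion --- is not what happens: the branches never interact, which is why the bookkeeping you flag as the main difficulty evaporates. Your reading of the $i<j$ hypothesis is also slightly off: it is not about which branch supplies a $1$, but simply that the antidiagonal entry $K_{l,2n-1-l}$ equals $(-1)^{\min(l,\,2n-1-l)}$, so the signed product of the two leading ones is $(-1)^{\min(i,j)}=(-1)^i$ only when $i<j$. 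With these two corrections your plan closes up and agrees with the paper's proof; the final differentiation step you defer to Proposition \ref{general dr} is indeed carried out there verbatim (in the proposition following this one in the paper), with the only new wrinkle being the branching of $\tfrac{d}{dx}\delta_{n-2}\t F$ into $\phi_{n-1}\delta_{n-1}\t F+\phi_n\delta_n\t F$, which produces the factor $2$ in $F^{(n-1)}K(sF^{(n-1)})^T=(-1)^{n-1}2$ rather than the $\pm1$ you would get by copying the $B_n$/$C_n$ computation.
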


\begin{proof} When $i+j>2n-1$, $(\delta_i\tilde F)K(s\delta_j\tilde F)^T$ is obviously zero since there are too many zeros. When $i+j=2n$, we have $(\delta_i\tilde F)K(s\delta_j\tilde F)^T=(-1)^i$ when $i<j$ as the signed product of the two ones. 

Like in the proof of Proposition \ref{strong}, when $i+j<2n$ we have that $(\delta_i\tilde F)K(s\delta_j\tilde F)^T$ is one or two $A$'s from Proposition \ref{concrete} for a suitable sequence of integrand functions possibly with sign. For example when $i=j=0$, we have that $\t FK(s\t F)^T=A_1(2n-2)+A_2(2n-2)$, where $A_1(2n-2)$ is the $A$ in Proposition \ref{concrete} for the sequence $(1\to (2n-2))$ and $A_2(2n-2)$ for $(1\to (n-2)n(n-1)(n+1)\to (2n-2))$. They are both zero by \er{source}. 
\end{proof}

\begin{proposition} 
Continue with the notation in Proposition \ref{alg for D}. 
Let $f_0(x)=\Big(\prod_{i=1}^{2n-2}\phi_i(x)\Big)^{-\frac{1}{2}}$. Define
$$
F(x)=f_0(x)\tilde F(x)
$$
Then we have 
\begin{align}\label{dif for d}
F^{(i)}K(sF^{(j)})^T&=0, \qquad\qquad  0\leq i,j\leq n-1\text{ and }i+j<2n-2\\
F^{(n-1)}K(sF^{(n-1)})^T&=(-1)^{n-1}2.\label{a 2}
\end{align}
\end{proposition}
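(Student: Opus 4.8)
The plan is to repeat the argument proving Proposition~\ref{general dr}, with the $D_n$ identities of Proposition~\ref{alg for D} replacing those of Proposition~\ref{strong}; the only genuinely new point is the branching of $\t F$ at position $n-2$.

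First I would record the derivative relations among the delayed vectors $\delta_i\t F$ of \er{did}. From \er{diff i} one has $\frac{d}{dx}(\delta_i\t F)=\phi_{i+1}\,\delta_{i+1}\t F$ for $0\le i\le n-3$, exactly as in the $B_n$ case, but at the fork both branches appear:
$$
\frac{d}{dx}(\delta_{n-2}\t F)=\phi_{n-1}\,\delta_{n-1}\t F+\phi_{n}\,\delta_{n}\t F .
$$
This is a short coordinate-by-coordinate check against \er{did}; the zero placed just after the leading $1$ of $\delta_{n-1}\t F$ is precisely what makes the entries in positions $n-1$ and $n$ balance. Carrying out the same induction as in Proposition~\ref{general dr}, I obtain, with $c_{i,l}$ functions of $x$,
$$
F^{(i)}=\sum_{l=0}^{i-1}c_{i,l}\,\delta_l\t F+f_0\Big(\prod_{l=1}^{i}\phi_l\Big)\delta_i\t F,\qquad 0\le i\le n-2,
$$
and one further differentiation produces a \emph{doubled} leading term
$$
F^{(n-1)}=\sum_{l=0}^{n-2}c_{n-1,l}\,\delta_l\t F+f_0\Big(\prod_{l=1}^{n-2}\phi_l\Big)\big(\phi_{n-1}\,\delta_{n-1}\t F+\phi_n\,\delta_n\t F\big).
$$
As in Proposition~\ref{general dr}, $s$ commutes with $\frac{d}{dx}$ and fixes $f_0,\phi_{n-1},\phi_n$, so $sF^{(j)}=(sF)^{(j)}$ has the analogous expansion with every $\delta$ and coefficient replaced by its $s$-image.

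Next I would expand $F^{(i)}K(sF^{(j)})^T$ bilinearly, so each summand is a function of $x$ times a pairing $(\delta_l\t F)K(s\delta_k\t F)^T$ with $l$ at most the top index appearing in $F^{(i)}$ (which is $i$ when $i\le n-2$ and $n$ when $i=n-1$), and similarly for $k$. For $0\le i,j\le n-1$ with $i+j<2n-2$ one checks $l+k\le 2n-2$ always, and that $l+k=2n-1$ cannot occur, since it would require a $\delta_{n-1}$ term in each factor, hence $i=j=n-1$. So Proposition~\ref{alg for D} kills every pairing and $F^{(i)}K(sF^{(j)})^T=0$, which is the first line of \er{dif for d}. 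For $i=j=n-1$ the same bound leaves only the two cross terms $(\delta_{n-1}\t F)K(s\delta_n\t F)^T$ and $(\delta_n\t F)K(s\delta_{n-1}\t F)^T$, both at level $2n-1$ and each equal to $(-1)^{n-1}$ (for the second, only the two matching leading $1$'s survive and $K$ is symmetric). Their total contribution is $f_0^{2}\big(\prod_{l=1}^{n-2}\phi_l\big)\phi_{n-1}\phi_n\big(\prod_{l=n+1}^{2n-2}\phi_l\big)\big((-1)^{n-1}+(-1)^{n-1}\big)$, and since $f_0^{2}\prod_{i=1}^{2n-2}\phi_i=1$ the scalar collapses to $1$, leaving $(-1)^{n-1}2$, which is \er{a 2}.

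The main obstacle is concentrated in the first step: getting the fork relation, and hence the two leading terms of $F^{(n-1)}$, exactly right against the slightly awkward conventions for $\delta_{n-1}\t F$ and $\delta_n\t F$ in \er{did}. Once that is in hand everything downstream is the same bilinear bookkeeping as in the $B_n$ case, and the factor $2$ in \er{a 2} is simply the shadow of the two branches of $\t F$ coming together when $F^{(n-1)}$ is paired with itself.
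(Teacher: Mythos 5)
Your proposal is correct and follows essentially the same route as the paper's own proof: the same fork relation $\frac{d}{dx}(\delta_{n-2}\t F)=\phi_{n-1}\delta_{n-1}\t F+\phi_n\delta_n\t F$, the same expansions of $F^{(i)}$ and $F^{(n-1)}$ in terms of the delayed vectors, and the same reduction to the two cross pairings $(\delta_{n-1}\t F)K(s\delta_n\t F)^T$ and $(\delta_n\t F)K(s\delta_{n-1}\t F)^T$ producing the factor $2(-1)^{n-1}$ after $f_0^2\prod_{i=1}^{2n-2}\phi_i=1$. No gaps.
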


\begin{remark} Note that \er{a 2} is compatible with solving for a $\Phi(x)\in O(2n,\bc)$ in Section \ref{solve}, since one would have $f^{(n-1)}=\varphi_{n}+\varphi_{2n}$, if $\varphi_{n+1}=f$ in the $D_n$ case. (Compare with \er{last for B}.) Then $B(F^{(n-1)},F^{(n-1)})=2B(\Phi^{n},\Phi^{2n})=2$, where $\Phi^n$ and $\Phi^{2n}$ are the $n$th and $2n$th rows of the solution matrix $\Phi\in O(2n,\bc)$. 
\end{remark}

\begin{proof} By \er{did}, we have 
\begin{align*}
\frac{d}{dx}\delta_i\t F&=\phi_{i+1}\delta_{i+1}\t F, &&i\leq n-3\\
\frac{d}{dx}\delta_{n-2}\t F&=\phi_{n-1}\delta_{n-1}\t F+\phi_{n}\delta_n \t F\\
\frac{d}{dx}\delta_{n-1}\t F&=\phi_{n}\delta_{n+1}\t F\\
\frac{d}{dx}\delta_{n}\t F&=\phi_{n-1}\delta_{n+1}\t F. 
\end{align*}
Therefore 
\begin{align*}
F^{(i)}&=\sum_{l=0}^{i-1} c_{i,l} \delta_l\t F+f_0\Big(\prod_{l=1}^i \phi_l\Big)\delta_i\t F&&i\leq n-2\\
F^{(n-1)}&=\sum_{l=0}^{n-2} c_{n-1,l} \delta_l\t F+f_0\Big(\prod_{l=1}^{n-2}\phi_l\Big) (\phi_{n-1}\delta_{n-1}\t F+\phi_{n}\delta_n \t F) && \\
sF^{(j)}&=\sum_{k=0}^{j-1} s(c_{j,k}) s\delta_k\t F+f_0\Big(\prod_{k=2n-1-j}^{2n-2} \phi_k\Big)s\delta_j\t F&&j\leq n-2\\
sF^{(n-1)}&=\sum_{l=0}^{n-2} s(c_{n-1,l}) s\delta_l\t F+f_0\Big(\prod_{l=n+1}^{2n-2}\phi_l\Big) (\phi_{n-1}s\delta_{n-1}\t F+\phi_{n}s\delta_n \t F) &&
\end{align*}
By Proposition \ref{alg for D}, \er{dif for d} is easy to see, and also 
\begin{multline*}
F^{(n-1)}K(sF^{(n-1)})^T\\
=f_0^2\Big(\prod_{1\leq l\leq n-2}^{n+1\leq l\leq 2n-2} \phi_l\Big)(\phi_{n-1}\phi_n)\Big((\delta_{n-1}\t F)K(s\delta_{n}\t F)^T+(\delta_n\t F)K(s\delta_{n-1}\t F)^T\Big)\\
=f_0^2\Big(\prod_{l=1}^{2n-2} \phi_l\Big)2(-1)^{n-1}=(-1)^{n-1}2.
\end{multline*}
by the definition of $f_0$. 
\end{proof}

\section{Parts (2) of Theorems}\label{part 2}

The proofs of Parts (2) of Theorems \ref{thm-C_n} and  \ref{thm-B_n} use the results of Section \ref{ite int}, in particular Proposition \ref{general dr}, while the proof of Theorem \ref{thm-A_n} Part (2) is done in Lemma \ref{prove W=1}. 
\begin{proof}[Proof of Theorem \ref{thm-C_n} Part (2)] In Proposition \ref{general dr}, let the number of $\phi_i(x)$, called $n$ there, be $2n-1$. Also require that 
$$
\phi_{2n-i}(x)=\phi_i(x),\quad 1\leq i\leq n-1.
$$
Note that this is Leznov's \cite{L} ingenious idea to enforce symmetry. Then $f_0(x)$ in \er{missed sqrt} becomes $p(x)$ in \er{def p(x)}. Denote the corresponding $F(x)$ in \er{old F} by $\bar F(x)$, that is,
$$
\bar F(x)=p(x)(1,\ci(1),\cdots,\ci(1\to n),\ci(1\to n,(n-1)),\cdots,\ci(1\to n\to 1)).
$$
It is clear that $s\bar F=\bar F$ and also for all the derivatives $\bar F^{(i)}$, since the swap as in \er{def swap}
 in this case is $s(i)=2n-i$. 
 
It is easy to see that the $F(x)$ in \er{Lez Cn} is just $F(x)=\bar F(x) Q_{2n}$. Here
\begin{gather}\label{transf}
\quad Q_{2n}=\begin{pmatrix}
(-1)^{n} & & & & & & & \\
 & (-1)^{n-1}& & & & & & \\
 & &\ddots & & & & & \\
 & & & -1 & & & & \\
 & & & & & & &1 \\
 & & & & & &1 & \\
 & & & & &\iddots & & \\
 & & & &1 & & & \\
\end{pmatrix}\implies
Q_{2n}\Omega Q_{2n}^T=(-1)^n  J_{2n},
\end{gather}
where $\Omega$ is from \er{Omega} and $J_{2n}$ is as in \er{def j} with rank $2n$. Now
\begin{multline*}
C(F^{(i)},F^{(j)})=F^{(i)}\Omega (F^{(j)})^T=\bar F^{(i)}Q_{2n}\Omega Q_{2n}^T (\bar F^{(j)})^T\\
=(-1)^n \bar F^{(i)}J_{2n}(\bar F^{(j)})^T=\begin{cases}
0 & \text{if }i+j<2n-1\\
-1 & \text{if }(i,j)=(n-1,n)
\end{cases}
\end{multline*}
by Proposition \ref{general dr}. 
\end{proof}

\begin{proof}[Proof of Theorem \ref{thm-B_n} Part (2)] Again, this is very similar to the previous proof. We let the number of $\phi_i(x)$ in Proposition \ref{general dr} be $2n$, and we require that 
$$
\phi_{2n+1-i}(x)=\phi_i(x),\quad 1\leq i\leq n.
$$
Call the corresponding function vector by $\bar F(x)$, and the $F(x)$ in \er{Lez Bn} is $F(x)=\bar F(x)Q_{2n+1}$, where $Q_{2n+1}$ is as in \er{transf} but with the lower right block of rank $n+1$. Note that $Q_{2n+1}\Theta Q_{2n+1}^T=(-1)^n  J_{2n+1}$, for $\Theta$ in \er{Theta} and $J_{2n+1}$ as in \er{def j}. We omit the other details. 
\end{proof}

The solutions to $D_n$ Toda field theory are also explained in \cite{L}, although there is a typo. We record the result below for completeness. 
\begin{theorem}[$D_n$]\cite{L} Let $\phi_1(x),\cdots,\phi_n(x)$ be $n$ functions of $x$. Define
$$f_0(x)=\frac{1}{\phi_1\cdots\phi_{n-2}\sqrt{\phi_{n-1}}\sqrt{\phi_{n}}}.$$
Let 
\begin{multline}
F(x)=f_0(x)(1,\ci(1),\ci(12),\cdots,\ci(1\to (n-2)),\ci(1\to (n-2)(n-1))\\
\ci(1\to (n-2) n),\ci(1\to (n-2)(n-1)n)+\ci(1\to (n-2)n(n-1)),\\
\ci(1\to (n-2)(n-1)n(n-2))+\ci(1\to (n-2)n(n-1)(n-2),\cdots,\\
\ci(1\to n,(n-2)\to 1)+\ci(1\to (n-2),n\to 1)
\end{multline}
be a vector of $2n$ functions of $x$. 

Similarly for $n$ functions $\psi_1(y),\cdots,\psi_n(y)$ of $y$, define $G(y)$. 

Use the old definition of $\tau_i$ in terms of $F$ and $G$ as in \er{taup}. Then 
\begin{align}
\sigma_i&=\tau_i &&1\leq i\leq n-2\label{as b4}\\
\sigma_{n-1}&=\frac{\sqrt{\tau_n+2\tau_{n-1}}+\sqrt{\tau_n-2\tau_{n-1}}}{2}&&\nm\\
\sigma_{n}&=\frac{\sqrt{\tau_n+2\tau_{n-1}}-\sqrt{\tau_n-2\tau_{n-1}}}{2}\nm&&
\end{align}
are solutions to the $D_n$ Toda field theories \er{in terms of tau}. Here $\sigma_{n-1}$ and $\sigma_n$ are solutions to the following two conditions
\begin{gather}
\sigma_{n-1}\sigma_n=\tau_{n-1},\label{must do 1}\\
\sigma_{n-1}^2+\sigma_{n}^2=\tau_n.\label{must do 2}
\end{gather}
Since the Cartan matrix for $D_n$ is 
$$
\begin{pmatrix}
2 & -1 & & & & \\
-1 & 2 & -1 & & & \\
 & \ddots & \ddots & \ddots& & \\
 & &-1 & 2 & -1 & -1\\
 & & & -1 & 2 & \\
 & & & -1 &  & 2\\
 \end{pmatrix},
$$
by Proposition \ref{use tau} this asserts that 
\begin{align}
DD(\sigma_i)&=\sigma_{i-1}\sigma_{i+1},&&1\leq i\leq n-3\nm\\
DD(\sigma_{n-2})&=\sigma_{n-3}\sigma_{n-1}\sigma_n&&\label{s(n-2)}\\
DD(\sigma_{n-1})&=\sigma_{n-2}\label{s(n-1)}\\
DD(\sigma_{n})&=\sigma_{n-2}\label{sn}
\end{align}

Equivalently, the $u_i=-\log \sigma_i$ for $1\leq i\leq n$ are solutions to the $D_n$ Toda field theory in \er{toda}. 
\end{theorem}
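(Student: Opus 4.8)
The plan is to reduce the claim, via Lemma~\ref{use tau} applied to the $D_n$ Cartan matrix, to the displayed system $DD(\sigma_i)=\sigma_{i-1}\sigma_{i+1}$ for $1\le i\le n-3$, $DD(\sigma_{n-2})=\sigma_{n-3}\sigma_{n-1}\sigma_n$, and $DD(\sigma_{n-1})=DD(\sigma_n)=\sigma_{n-2}$ (with $\sigma_0=1$), and to handle it in three groups. The crux is the last group, the two ``spinor'' equations, where the fork of the $D_n$ Dynkin diagram is what forces the radicals into $\sigma_{n-1},\sigma_n$.

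I would first record the relations \eqref{must do 1} and \eqref{must do 2}: writing $p=\sqrt{\tau_n+2\tau_{n-1}}$, $q=\sqrt{\tau_n-2\tau_{n-1}}$, the definitions give $\sigma_{n-1}=\tfrac12(p+q)$, $\sigma_n=\tfrac12(p-q)$, hence $\sigma_{n-1}\sigma_n=\tfrac14(p^2-q^2)=\tau_{n-1}$ and $\sigma_{n-1}^2+\sigma_n^2=\tfrac12(p^2+q^2)=\tau_n$. Next, since $\sigma_i=\tau_i$ for $1\le i\le n-2$, Proposition~\ref{general} gives $DD(\sigma_i)=\tau_{i-1}\tau_{i+1}$; for $i\le n-3$ this equals $\sigma_{i-1}\sigma_{i+1}$, and for $i=n-2$ it equals $\tau_{n-3}\tau_{n-1}=\sigma_{n-3}(\sigma_{n-1}\sigma_n)$ by \eqref{must do 1}, namely \eqref{s(n-2)}. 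The first two groups are thus settled.

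For the spinor equations I would imitate the $B_n$ argument of Section~\ref{part 1}. Enforcing Leznov's symmetry $\phi_{2n-1-i}=\phi_i$ ($1\le i\le n-2$) in the differential identity for $D_n$ of Section~\ref{ite int} and conjugating by the matrix $Q$ (as in the proofs of Parts~(2) of Theorems~\ref{thm-C_n} and \ref{thm-B_n}) yields that $F(x)$ satisfies $B(F^{(i)},F^{(j)})=0$ for $i+j<2n-2$ and $B(F^{(n-1)},F^{(n-1)})$ equal to a nonzero constant (from \eqref{a 2}), $B$ being the orthogonal form of $D_n$. A Gram--Schmidt completion in the orthogonal setting, modelled on Proposition~\ref{Gram-Schmidt} but with $F^{(n-1)}$ entering not as a single row but as the sum $\Phi^n+\Phi^{2n}$ of the two ``middle'' mutually dual isotropic rows, produces $\Phi(x)\in O(2n,\bc)$, a companion $\Psi(y)$, and $\Upsilon(x,y)=\Phi(x)\Psi(y)\in O(2n,\bc)$ with $\det\Upsilon=\pm1$. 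One then identifies $\tau_{n-1}$ and $\tau_n$ with principal minors of $\Upsilon$ (the one for $\tau_n$ straddling the index skipped by $F,\dots,F^{(n-2)}$), and uses Proposition~\ref{swap} and its extension to non-principal minors to pin down, alongside Proposition~\ref{general}'s relation $DD(\tau_n)=\tau_{n-1}\tau_{n+1}$, both the auxiliary quantity $\tau_{n+1}$ and the mixed combination $\tfrac12(\tau_n\tau_{n-1,xy}+\tau_{n-1}\tau_{n,xy}-\tau_{n,x}\tau_{n-1,y}-\tau_{n,y}\tau_{n-1,x})$. Feeding these into the bilinear expansion of $DD$ on $\sigma_{n-1}\pm\sigma_n$ (which equals $p$ or $q$), together with the identity $DD(\sqrt v)=\tfrac12 DD(v)/v$ already used for $B_n$, should reduce $DD(\sigma_{n-1})+DD(\sigma_n)$ to $2\tau_{n-2}$ and $DD(\sigma_{n-1})-DD(\sigma_n)$ to $0$, which gives \eqref{s(n-1)} and \eqref{sn}.

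The main obstacle is this last step. Because $\sigma_{n-1},\sigma_n$ correspond to the two half-spinor representations at the fork, neither is a single principal minor of $\Upsilon$ --- only their product $\tau_{n-1}$ and the sum of their squares $\tau_n$ are --- so the clean relation $\tau_{n+1}=\pm\tau_n$ that suffices in the $B_n$ case must be replaced by the more delicate bookkeeping above, or, more conceptually, by identifying $\sigma_{n-1},\sigma_n$ with the two Pfaffian-type Pl\"ucker coordinates of the maximal isotropic subspace $\Upsilon(\langle e_{n+1},\dots,e_{2n}\rangle)$ and running the exterior-algebra argument of Proposition~\ref{swap} on them directly. Everything else is routine: the equivalence of the radical formulas with \eqref{must do 1}--\eqref{must do 2} is the algebra above, and passing from the variant equations to the $D_n$ Toda system \eqref{toda} proper (absorbing the $\pm$ and the factors $\tfrac12$ carried implicitly by the square roots) is handled as in Remark~\ref{coeff's}.
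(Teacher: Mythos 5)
Your first two paragraphs are correct and coincide with what the paper actually establishes. The algebra $\sigma_{n-1}\sigma_n=\tfrac{1}{4}(p^2-q^2)=\tau_{n-1}$ and $\sigma_{n-1}^2+\sigma_n^2=\tfrac{1}{2}(p^2+q^2)=\tau_n$ gives \eqref{must do 1}--\eqref{must do 2}, and Proposition \ref{general} combined with \eqref{must do 1} settles $DD(\sigma_i)=\sigma_{i-1}\sigma_{i+1}$ for $i\le n-3$ as well as \eqref{s(n-2)}. (The paper's closing remark runs the same computation in the reverse direction, deriving \eqref{must do 1}--\eqref{must do 2} as the conditions forced by \eqref{s(n-2)}--\eqref{sn} via the identity $DD(vw)=DD(v)w^2+v^2DD(w)$ applied to $\tau_{n-1}=\sigma_{n-1}\sigma_n$; it is the same calculation.)

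The genuine gap is in your third paragraph, i.e.\ in the two spinor equations \eqref{s(n-1)} and \eqref{sn}, and you have in effect flagged it yourself. The $B_n$ mechanism does not transplant: since $DD$ is quadratic rather than linear, knowing $DD(\tau_{n-1})=\tau_{n-2}\tau_n$ and $DD(\tau_n)=\tau_{n-1}\tau_{n+1}$ from Proposition \ref{general} does not determine $DD(\sigma_{n-1})$ and $DD(\sigma_n)$ separately --- one also needs the mixed bilinear combination of $\tau_{n-1}$ and $\tau_n$ that you write down, and neither $\sigma_{n-1}$ nor $\sigma_n$ is itself a principal minor of $\Upsilon$, so Proposition \ref{swap} gives no direct handle on it. Your proposed remedies (``more delicate bookkeeping,'' or identifying $\sigma_{n-1},\sigma_n$ with half-spinor/Pfaffian Pl\"ucker coordinates of a maximal isotropic subspace) point in the right direction conceptually, but they are only named, not executed; the words ``should reduce'' mark the missing step. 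You should be aware, though, that the paper does not close this gap either: its final remark states explicitly that only the proofs of the last two equations \eqref{s(n-1)} and \eqref{sn} are not clear, and the theorem is recorded from Leznov for completeness with only the heuristic derivation of \eqref{must do 1}--\eqref{must do 2}. So your proposal proves exactly as much as the paper does, and the step you leave open is the step the paper leaves open; but as a complete proof of the statement as written, it is not.
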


\begin{remark} We finally remark that only the proofs of the last two equations \er{s(n-1)} and \er{sn} are not clear, although one can argue that \er{must do 1} and \er{must do 2} are the only ways to obtain a solution. This must be how Leznov \cite{L} arrived at these solutions and we reproduce the process as follows. By \er{as b4} and Proposition \ref{general}, we have
$$
DD(\sigma_{n-2})=DD(\tau_{n-2})=\tau_{n-3}\tau_{n-1}=\sigma_{n-3}\tau_{n-1}.
$$
Comparison with the wanted equation \er{s(n-2)} gives \er{must do 1}. A simply calculation from \er{def dd} shows that 
$$
DD(vw)=DD(v)w^2+v^2DD(w)
$$
for two functions $v$ and $w$ of $x$ and $y$. Therefore \er{must do 1} gives
$$
DD(\tau_{n-1})=DD(\sigma_{n-1}\sigma_{n})=DD(\sigma_{n-1})\sigma_{n}^2+\sigma_{n-1}^2 DD(\sigma_{n}).
$$
Proposition \ref{general} again gives that $LHS=\tau_{n-2}\tau_n$. The wanted equations \er{s(n-1)} and \er{sn} give that $RHS=\sigma_{n-2}(\sigma_{n-1}^2+\sigma_n^2)$. Therefore with \er{as b4}, we get \er{must do 2}. 
\end{remark}

\begin{bibdiv}
\begin{biblist}
\bib{BBT}{book}{
   author={Babelon, Olivier},
   author={Bernard, Denis},
   author={Talon, Michel},
   title={Introduction to classical integrable systems},
   series={Cambridge Monographs on Mathematical Physics},
   publisher={Cambridge University Press},
   place={Cambridge},
   date={2003},
   pages={xii+602},
   isbn={0-521-82267-X},
   review={\MR{1995460 (2004e:37085)}},
   doi={10.1017/CBO9780511535024},
}

\bib{W-sym}{article}{
   author={Balog, J.},
   author={Feh{\'e}r, L.},
   author={O'Raifeartaigh, L.},
   author={Forg{\'a}cs, P.},
   author={Wipf, A.},
   title={Toda theory and $\scr W$-algebra from a gauged WZNW point of view},
   journal={Ann. Physics},
   volume={203},
   date={1990},
   number={1},
   pages={76--136},
   issn={0003-4916},
   review={\MR{1071398 (92b:58084)}},
}

\bib{Chen}{article}{
   author={Chen, Kuo Tsai},
   title={Iterated path integrals},
   journal={Bull. Amer. Math. Soc.},
   volume={83},
   date={1977},
   number={5},
   pages={831--879},
   issn={0002-9904},
   review={\MR{0454968 (56 \#13210)}},
}
		
\bib{DS}{article}{
   author={Drinfel{\cprime}d, V. G.},
   author={Sokolov, V. V.},
   title={Lie algebras and equations of Korteweg-de Vries type},
   language={Russian},
   conference={
      title={Current problems in mathematics, Vol. 24},
   },
   book={
      series={Itogi Nauki i Tekhniki},
      publisher={Akad. Nauk SSSR Vsesoyuz. Inst. Nauchn. i Tekhn. Inform.},
      place={Moscow},
   },
   date={1984},
   pages={81--180},
   review={\MR{760998 (86h:58071)}},
}

\bib{EGR}{article}{
   author={Etingof, Pavel},
   author={Gelfand, Israel},
   author={Retakh, Vladimir},
   title={Factorization of differential operators, quasideterminants, and
   nonabelian Toda field equations},
   journal={Math. Res. Lett.},
   volume={4},
   date={1997},
   number={2-3},
   pages={413--425},
   issn={1073-2780},
   review={\MR{1453071 (98d:58081)}},
}

\bib{feher-early}{article}{
   author={Forg{\'a}cs, P.},
   author={Wipf, A.},
   author={Balog, J.},
   author={Feh{\'e}r, L.},
   author={O'Raifeartaigh, L.},
   title={Liouville and Toda theories as conformally reduced WZNW theories},
   journal={Phys. Lett. B},
   volume={227},
   date={1989},
   number={2},
   pages={214--220},
   issn={0370-2693},
   review={\MR{1013515 (91b:81046)}},
   doi={10.1016/S0370-2693(89)80025-5},
}

\bib{FH}{book}{
   author={Fulton, William},
   author={Harris, Joe},
   title={Representation theory},
   series={Graduate Texts in Mathematics},
   volume={129},
   note={A first course;
   Readings in Mathematics},
   publisher={Springer-Verlag},
   place={New York},
   date={1991},
   pages={xvi+551},
   isbn={0-387-97527-6},
   isbn={0-387-97495-4},
   review={\MR{1153249 (93a:20069)}},
}

\bib{K1}{article}{
   author={Kostant, Bertram},
   title={The principal three-dimensional subgroup and the Betti numbers of
   a complex simple Lie group},
   journal={Amer. J. Math.},
   volume={81},
   date={1959},
   pages={973--1032},
   issn={0002-9327},
   review={\MR{0114875 (22 \#5693)}},
}

\bib{L}{article}{
   author={Leznov, A. N.},
   title={On complete integrability of a nonlinear system of partial
   differential equations in two-dimensional space},
   language={Russian, with English summary},
   journal={Teoret. Mat. Fiz.},
   volume={42},
   date={1980},
   number={3},
   pages={343--349},
   issn={0564-6162},
   review={\MR{569026 (82e:58047)}},
}

\bib{LS}{article}{
   author={Leznov, A. N.},
   author={Saveliev, M. V.},
   title={Representation of zero curvature for the system of nonlinear
   partial differential equations $x_{\alpha ,z\bar z}={\rm
   exp}(kx)_{\alpha }$ and its integrability},
   journal={Lett. Math. Phys.},
   volume={3},
   date={1979},
   number={6},
   pages={489--494},
   issn={0377-9017},
   review={\MR{555332 (82k:58045)}},
   doi={10.1007/BF00401930},
}

\bib{LS-book}{book}{
   author={Leznov, A. N.},
   author={Saveliev, M. V.},
   title={Group-theoretical methods for integration of nonlinear dynamical
   systems},
   series={Progress in Physics},
   volume={15},
   note={Translated and revised from the Russian;
   Translated by D. A. Leuites},
   publisher={Birkh\"auser Verlag},
   place={Basel},
   date={1992},
   pages={xviii+290},
   isbn={3-7643-2615-8},
   review={\MR{1166385 (93d:58194)}},
}

\bib{N1}{article}{
   author={Nie, Zhaohu},
   title={Characteristic integrals for Toda field theories},
   journal={Preprint, Utah State University},
   date={2012},
}

\end{biblist}
\end{bibdiv}

\bigskip
\end{document}